\renewcommand\footnotetextcopyrightpermission[1]{}
\pgfplotsset{compat=1.18}
\newcommand{\examplebox}[1]{%
  \begin{tcolorbox}[
    breakable,     
    colback=gray!5!white,
    boxrule=0.1pt,
    left=0.5mm, right=0.5mm, top=0.3mm, bottom=0.3mm
  ]
    {\small#1}
  \end{tcolorbox}%
}
\theoremstyle{definition}
\newtheorem{exmp}{Example}
\definecolor{frstgreen}{RGB}{34,139,34}
\definecolor{orng}{RGB}{255,140,0} 
\definecolor{linkblue}{RGB}{0,102,204}
\newcommand{\myref}[2]{\hyperref[#2]{\textcolor{linkblue}{#1~\ref*{#2}}}}
\newcommand{\filledcircnum}[1]{%
  \tikz[baseline=(char.base)]{
    \node[shape=circle,draw,fill=black,inner sep=1pt] (char) {\color{white}\sffamily\bfseries\tiny #1};}}
\newcommand{\mpara}[1]{\noindent{\bf #1}}
\newcommand{\cmark}{\makebox[\linewidth]{\checkmark}}
\newcommand{\xmark}{\makebox[\linewidth]{$\times$}}
\newcounter{takeawaycounter}
\renewcommand{\thetakeawaycounter}{\arabic{takeawaycounter}}
\theoremstyle{definition}
\newtheorem{definition}{Definition}[section]
\newtheorem{lemma}{Lemma}
\def\th@definition{%
  \normalfont
  \setlength\parindent{0pt}
  \setlength\parskip{0pt}
}
\tikzset{
  hive/.style={
    regular polygon,
    regular polygon sides=6,
    draw,
    fill=gray!20,
    minimum size=1cm,
    align=center,
    font=\scriptsize
  }
}
\definecolor{honey}{HTML}{ffcb53}
\tikzset{
  hiveperson/.style={hive, fill=honey!100},
  hiveorg/.style={hive, fill=honey!75},
  hivepost/.style={hive, fill=honey!60},
  hiveplace/.style={hive, fill=honey!30}
}
\lstdefinelanguage{XML}
{
  morekeywords={xs:schema, xs:element, xs:complexType, xs:sequence, xs:attribute, xs:annotation, xs:appinfo},
  sensitive=true,
  morecomment=[s]{<!--}{-->},
  morestring=[b]",
}
\lstdefinestyle{myxmlstyle}{
  language=XML,
  basicstyle=\scriptsize\ttfamily,
  backgroundcolor=\color{gray!5},
  showspaces=false,
  showstringspaces=false,
  showtabs=false,
  frame=single,
  captionpos=b,
  breaklines=true,
  breakatwhitespace=true,
  tabsize=1
}
\begin{document}

\title{PG-HIVE: Hybrid Incremental Schema Discovery \\for Property Graphs}

\author{Sophia Sideri}
\email{sophisid@ics.forth.gr}
\orcid{0009-0000-7100-0452}
\affiliation{%
  \institution{ICS, FORTH}
  \city{Heraklion}
  \country{Greece}
}

\author{Georgia Troullinou}
\email{georgia.troullinou@univ-grenoble-alpes.fr}
\orcid{0000-0001-8033-7372}
\affiliation{%
  \institution{CNRS, Univ. Grenoble Alpes}
  \city{Grenoble}
  \country{France}
}

\author{Elisjana Ymeralli}
\email{ymeralli@ics.forth.gr}
\orcid{0009-0009-2748-9815}
\affiliation{%
  \institution{ICS, FORTH}
  \city{Heraklion}
  \country{Greece}
}

\author{Vasilis Efthymiou}
\email{vefthym@hua.gr}
\orcid{0000-0002-0683-030X}
\affiliation{%
  \institution{Harokopio University of Athens}
  \city{Athens}
  \country{Greece}
}

\author{Dimitris Plexousakis}
\email{dp@ics.forth.gr}
\orcid{0000-0002-0863-8266}
\affiliation{%
  \institution{ICS, FORTH}
  \city{Heraklion}
  \country{Greece}
}

\author{Haridimos Kondylakis}
\email{kondylak@ics.forth.gr}
\orcid{0000-0002-9917-4486}
\affiliation{%
  \institution{ICS, FORTH}
  \city{Heraklion}
  \country{Greece}
}

\renewcommand{\shortauthors}{Sideri et al.}
\renewcommand{\shorttitle}{PG-HIVE: Hybrid Incremental Schema Discovery for Property Graphs}

\begin{abstract}
Property graphs have rapidly become the de facto standard for representing and managing complex, interconnected data, powering applications across domains from knowledge graphs to social networks. 
Despite the advantages, their schema-free nature poses major challenges for integration, exploration, visualization, and efficient querying. 
To bridge this gap, we present PG-HIVE, a novel framework for automatic schema discovery in property graphs. PG-HIVE goes beyond existing approaches by uncovering latent node and edge types, inferring property datatypes, constraints, and cardinalities, and doing so even in the absence of explicit labeling information. Leveraging a unique combination of Locality-Sensitive Hashing with property- and label-based clustering, PG-HIVE identifies structural similarities at scale. Moreover, it introduces incremental schema discovery, eliminating costly recomputation as new data arrives. Through extensive experimentation, we demonstrate that PG-HIVE consistently outperforms state-of-the-art solutions, 
in both accuracy (by up to 65\% for nodes and 40\% for edges), and efficiency (up to 1.95x faster execution),
unlocking the full potential of schema-aware property graph management.

\end{abstract}

\acmConference[EDBT'26]{ACM Conference}{XX}{X,
  XX, XX}

\maketitle

\section{Introduction}
Schema discovery is the process of automatically inferring the underlying structure of data without relying on prior schema information~\cite{miller2003schema, paton2024dataset}. 
It has a significant role in modern data management, supporting integration of heterogeneous sources~\cite{DBLP:conf/caise/KondylakisETYP24}, query optimization~\cite{mulder2025optimizing,dumbrava2019approximate}, exploration~\cite{polleres2023does,lissandrini2018x2q}, and data quality assurance~\cite{bonifati2019schema}. 
While schema discovery has been widely studied for relational~\cite{bogatu2020dataset,khatiwada2025tabsketchfm,adelfio2013schema} and semi-structured data~\cite{discala2016automatic,yun2024recg,mior2023jsonoid,dani2024introducing,sadruddin2025llms4schemadiscovery}, property graphs (PGs) pose unique challenges.

Property Graphs are directed multigraphs, where both nodes and edges can have labels and properties (i.e., key-value pairs)~\cite{angles2018property}. 
Unlike relational databases with fixed schemas, graph databases, including property graphs, are more versatile, do not require an explicit schema~\cite{DBLP:journals/corr/abs-2411-09999,DBLP:journals/pvldb/BonifatiDMGJLP22} to be defined, and can dynamically evolve as more data is ingested. Due to this flexibility, property graphs have been widely adopted in various domains, including social networks, healthcare, bioinformatics, and telecommunications~\cite{katal2013big,sahu2020ubiquity}.

Although not requiring a fixed schema offers flexibility, it can also lead to a misconception about the structure of the graph database over time, potentially compromising data integrity and usability~\cite{lissandrini2022knowledge}. To alleviate this problem, schema discovery is crucial for understanding the structure and semantics of property graphs. 
In the context of a property graph, we define its schema as the collection of node and edge types, together with their associated properties, constraints, and cardinalities. A \emph{type} refers to a category of nodes or edges that share a common structural information.

\mpara{The problem.} 
In property graphs, the absence of an explicit schema makes understanding the data’s structure challenging. Nodes and edges can carry arbitrary or missing properties, lack consistent labels, and provide no guarantees of uniformity~\cite{kondylakis2025property}. As graphs evolve, this ambiguity complicates integration, querying, and data quality assurance. Crucially, we cannot assume in advance what constitutes a type or how types should be distinguished, meaning that the data remains opaque until explicitly analyzed. The goal of schema discovery, therefore, is to automatically infer the types of nodes and edges, their properties, and the structural constraints that govern them—while ensuring accuracy, efficiency, and adaptability to dynamic graph scenarios.

\mpara{Challenges.} 
To address the problem of schema discovery, efficient processing methods are needed~\cite{paton2024dataset}. Although schema discovery has been studied for semi-structured RDF data~\cite{kellou2022survey,kardoulakis2021hint}, PGs pose additional challenges~\cite{kondylakis2025property}.
PGs refer to a broad class of enriched graph structures allowing many technical interpretations in different software systems. These interpretations are often incompatible and based on different assumptions~\cite{angles2017foundations}.
Unlike RDF~\cite{antoniou2004semantic,DBLP:conf/er/KondylakisP12,DBLP:conf/sigmod/KondylakisP11}, PGs lack a widely accepted schema language and allow nodes and edges to have arbitrary sets of properties, multiple labels, or none at all~\cite{sakr2021future}. 
Hence, RDF schema discovery methods~\cite{DBLP:conf/bdcsintell/BouhamoumKL18,DBLP:conf/esws/ChevallierKFC24,DBLP:conf/edbt/ChristodoulouPF13,DBLP:conf/er/Kellou-MenouerK15,kellou2022survey,kardoulakis2021hint} cannot be applied directly for inferring the schema of PGs~\cite{ahmetaj2025common}.
Nevertheless, a small number of approaches for PG schema discovery have already emerged - GMMSchema~\cite{DBLP:conf/edbt/BonifatiDM22} and its demo (DiscoPG~\cite{DBLP:journals/pvldb/BonifatiDMGJLP22}), as well as SchemI~\cite{lbath2021schema}. 
Although they produce relatively good results under ideal conditions, they employ methods that require \textit{fully labeled data} and they are
less effective in noisy or incomplete datasets (\myref{\S}{sec:evaluation}). 

When data is incomplete, noisy, or integrated from heterogeneous sources, schema discovery becomes a rather difficult problem. 
Missing properties, inconsistent labels, and conflicting structural patterns obscure the identification of meaningful types. Additionally, we cannot assume that data is missing without domain knowledge, as it can vary in structure. In these cases, clustering plays a central role in grouping nodes and edges based on shared properties, labels, or semantics~\cite{zouaq2020schema,DBLP:conf/edbt/BonifatiDM22}. Without clustering, exhaustive pairwise comparisons would be required, which is computationally infeasible for large graphs. The challenge, however, lies in choosing a clustering strategy that remains effective across diverse datasets. Because each graph exhibits different structural and semantic characteristics, no single method can generalize well. Consequently, identifying the appropriate number of clusters—and thus the correct number of types 
—remains an open problem, with current techniques offering only approximate solutions~\cite{wang2023graph,wang2024steiner}.

\mpara{The solution.} In this paper, we present \textbf{PG-HIVE}, a new open-source approach for schema discovery in large property graph datasets. 
PG-HIVE is designed to infer the schema even in the absence of explicit labels.
It uses both property and label information, when available, in an adaptive manner. To group structurally and semantically similar nodes and edges, clustering is performed using Locality-Sensitive Hashing (LSH)~\cite{datar2004locality,spark_mllib_lsh,DBLP:conf/vldb/GionisIM99,DBLP:conf/cikm/WangCSR13}. The clustering parameters are dynamically adapted to the specific characteristics of each dataset. This forms a \textit{hybrid schema discovery} approach,  as it combines labels, properties, and graph structure to identify types, instead of relying only on the labels or the properties.

Unlike prior works, which focus only on the type extraction,
PG-HIVE  extracts a detailed schema by identifying \filledcircnum{A} node types, \filledcircnum{B} edge types, \filledcircnum{C} property data types,
\filledcircnum{D} key constraints, and \filledcircnum{E} cardinalities, thus
capturing a richer schema structure. Moreover, the data can be processed incrementally, in batches, allowing the schema to be continuously updated as new data is added without requiring a full recomputation, making PG-HIVE efficient, modular, and suitable for real-world environments.
We provide a thorough experimentation using eight datasets--four real and four synthetic--under varying noise levels, using two LSH-based clustering methods~\cite{leskovec2020mining,DBLP:books/cu/LeskovecRU14}. We explore different configurations, analyzing their impact on schema discovery compared to our adaptive approach. Then, we compare PG-HIVE against state-of-the-art methods (GMMSchema~\cite{DBLP:conf/edbt/BonifatiDM22} and SchemI~\cite{lbath2021schema}), showing that PG-HIVE consistently has better accuracy in schema discovery, and it is particularly superior 
under noisy conditions and when label information is missing, where other approaches degrade. 
Across all datasets, PG-HIVE achieves up to 65\% higher accuracy for nodes, 40\% for edges, and 1.95x faster execution compared to existing methods, demonstrating robustness and efficiency.

To the best of our knowledge, PG-HIVE is the first system that supports an incremental and hybrid schema discovery for property graphs, adapting to the individual datasets' characteristics, and going beyond simple node type discovery. 




\mpara{Outline.} \myref{Section }{sec:related_work} 
discusses related work. \myref{Section }{sec:preliminaries} gives a conceptual description of the problem. \myref{Section }{sec:pg-hive} describes the methodology. \myref{Section }{sec:evaluation} shows our experimental evaluation. \myref{Section }{sec:conclusion} discusses conclusions and future work.

\section{Related Work}
\label{sec:related_work}

Although schema discovery for big data has been extensively studied 
and explored, especially for RDF~\cite{DBLP:conf/bdcsintell/BouhamoumKL18,DBLP:conf/esws/ChevallierKFC24,DBLP:conf/edbt/ChristodoulouPF13,DBLP:conf/er/Kellou-MenouerK15,kellou2022survey}, tabular~\cite{bogatu2020dataset,khatiwada2025tabsketchfm,adelfio2013schema} and JSON~\cite{discala2016automatic,yun2024recg,mior2023jsonoid,dani2024introducing,sadruddin2025llms4schemadiscovery} data,
only a limited number of approaches~\cite{DBLP:conf/edbt/BonifatiDM22,DBLP:journals/pvldb/BonifatiDMGJLP22,lbath2021schema} have tried to address the complexity and expressiveness of the property graph data model. 

\mpara{RDF approaches.} RDF-based methods cannot be applied directly to PGs due to fundamental differences in the data model~\cite{ahmetaj2025common,angles2018property}. RDF edges are labeled but have no properties, while PG edges may carry properties as key–value pairs. Additionally, PG nodes can have multiple or no labels, indicating their type, while in RDF, types are described by the relation \texttt{rdfs:Class}. Finally, RDF is fundamentally based on ontologies, whereas for PGs, a standard schema definition is not yet available.
\begin{table}[t]
\centering
\caption{Schema discovery approaches on property graphs.}
\scriptsize
\vspace{-3mm}
\resizebox{\columnwidth}{!}{%
\begin{tabular}{|p{0.95cm}|p{1.45cm}|p{1.7cm}|p{1.45cm}|p{1.45cm}|}
\hline
\textbf{} 
& \textbf{SchemI \cite{lbath2021schema}} 
& \textbf{GMMSchema \cite{DBLP:conf/edbt/BonifatiDM22}} 
& \textbf{DiscoPG \cite{DBLP:journals/pvldb/BonifatiDMGJLP22}} 
& \textbf{PG-HIVE (ours)} \\
\hline

{Label \newline Independent} 
& {\xmark} 
& {\xmark} 
& {\xmark} 
& {\cmark} \\
\hline

{Multilabeled \newline Elements} 
& {\xmark} 
& {\cmark}  
& {\cmark} 
& {\cmark} \\
\hline

{Schema \newline Elements}   
& {\centering Nodes \& Edges} 
& {\centering Nodes only} 
& {\centering Nodes, queries associated Edges} 
& {\centering Nodes, Edges \& constraints} \\
\hline

Constraints 
& {\xmark} 
& {\xmark} 
& {\xmark} 
& {\cmark} \\
\hline

Incremental     
& {\xmark} 
& {\xmark} 
& {\cmark} 
& {\cmark}  \\
\hline

Automation      
& {\cmark} 
& {\cmark} 
& {\cmark} 
& {\cmark} \\
\hline

Notes           
& {\centering Cannot handle missing labels} 
& {\centering GMM, cannot handle missing labels} 
& {\centering Demo of GMMSchema} 
& {\centering LSH and fine tuning} \\
\hline
\end{tabular}
}
\vspace{-5mm}
\label{tab:rel_work}
\end{table}


\mpara{PG schema definitions.} Focusing more on the definition of schema for PGs, this has been defined differently across works, adding complexity to schema discovery. For example, approaches like ~\cite{lbath2021schema} treat each distinct label (e.g., \texttt{Person}, \texttt{Student}) as a separate type, 
while several datasets~\cite{himmelstein2017hetionet,takemura2015synaptic,takemura2017connectome,DBLP:conf/sigmod/ErlingALCGPPB15,icij_bahamas} define types by sets of co-occurring labels, e.g., \{\texttt{Student, Person}\} versus \{\texttt{Athlete, Person}\}. 
In integration scenarios, different datasets may use distinct labels for the same conceptual entity (e.g., \texttt{Organization} and \texttt{Company}) or even employ labels in different languages. On the other hand, hierarchical datasets may flatten all nodes under a single generic label (e.g., \texttt{Thing} in CIDOC-CRM~\cite{CIDOC_CRM} datasets). 
These variations lead to ambiguities and inconsistencies, especially when labels are reused across domains (e.g., \texttt{Actor} as a person or as an organization) and complicate schema discovery. 


\mpara{PG schema discovery approaches.} \myref{Table}{tab:rel_work} summarizes the characteristics of the approaches for schema discovery for property graphs.
One of the earliest works, SchemI~\cite{lbath2021schema}, assumes that all nodes and edges are labeled, and groups similar node types based on shared labels. However, this method is limited to datasets with complete type label
declarations, and it cannot infer schemas when labels and properties are missing or inconsistent. A more advanced approach, GMMSchema~\cite{DBLP:conf/edbt/BonifatiDM22}, introduces hierarchical clustering based on Gaussian Mixture Models (GMM)~\cite{bishop2006pattern} to group nodes by analyzing labels and property distributions. However, it has several limitations: (i) it focuses only on node clustering and does not infer relationships between clusters (i.e., edge types), (ii) it assumes fully labeled datasets, (iii) it is not designed to handle missing values or noisy properties, which are common in practice, 
and (iv) it applies sampling techniques to improve performance on large graphs, which impacts the completeness or precision of the inferred schema. 
Additionally, the demo of GMMSchema,  DiscoPG~\cite{DBLP:journals/pvldb/BonifatiDMGJLP22}, incorporates an incremental approach that uses memorization to avoid unnecessary search for types that have already been found. However, it remains fundamentally based on the GMMSchema and queries the associated edges of the discovered nodes, and might lose information of updated clusters.


In summary, most existing approaches either rely on fully annotated datasets, focus on type discovery without modeling constraints, or require manual input. 
They do not support incremental updates or adapt well to noisy and evolving data. \textbf{PG-HIVE} addresses these limitations by providing a hybrid incremental schema discovery framework that models more information about the schema, rather than only the types,
such as constraints and cardinalities, even in the presence of incomplete or evolving data.


\section{Preliminaries}
\label{sec:preliminaries}
\begin{figure}[t]
\centering
\resizebox{0.9\columnwidth}{!}{%
\begin{tikzpicture}[
  node distance=0.7cm and 1.2cm,
  entity/.style={draw, circle, fill=gray!15, minimum size=0.9cm, align=center, font=\scriptsize},
  person/.style={entity, fill=gray!25},
  org/.style={entity, fill=gray!20},
  post/.style={entity, fill=gray!30},
  place/.style={entity, fill=gray!35},
  literal/.style={draw, dashed, rectangle, font=\tiny, inner sep=1.5pt, align=center, minimum width=1.2cm},
  arrow/.style={->, thin},
  arrow2/.style={-, dashed},
  every node/.style={font=\scriptsize}
]

\node[hiveperson] (bob) {Person};
\node[literal, left=0.3 of bob] (b1) {name\\Bob};
\node[literal, above left=0.3cm and 0.4cm of bob] (b2) {gender\\male};
\node[literal, above=0.4cm of bob] (b3) {bday\\2/5/1980};

\node[hiveperson, right=1cm of bob] (alice) {\hypertarget{alice}{}};
\node[literal, above=0.53cm of alice] (a1) {name\\Alice};
\node[literal, below=0.1cm of alice] (a2) {gender\\female};
\node[literal, above right=0.15cm and 0.15cm of alice] (a3) {bday\\19/12/1999};

\node[hiveperson, right=1cm of alice] (john) {Person};
\node[literal, right=0.2cm of john] (j1) {name\\John};
\node[literal, above right=0.3cm and 0.4cm of john] (j2) {gender\\male};
\node[literal, above=0.4cm of john] (j3) {bday\\24/9/2005};

\node[hivepost, below left=1.1cm and 0.1cm of bob] (post1) {Post};
\node[literal, above left=0.1cm of post1] (p1) {imgFile\\screenshot.png};

\node[hivepost, below right=1.1cm and 0.1 of john] (post2) {Post};
\node[literal, below=0.2cm of post2] (p2) {content\\bazinga!};

\node[hiveorg, right=0.7cm of post1] (org) {Org.};
\node[literal, below=0.2cm of org] (o2) {url\\example.com};
\node[literal, below left=0.35cm and 0.35cm of org] (o1) {name\\Example};

\node[hiveplace, right=1.5cm of org] (place) {Place};
\node[literal, below=0.22cm of place] (pl1) {name\\Greece};

\draw[arrow] (alice) -- node[above, font=\tiny, align=center] {KNOWS} (john);
\draw[arrow] (alice) -- node[above, font=\tiny, align=center] {KNOWS\\\textit{since: 2025}} (bob);
\draw[arrow] (bob) -- node[font=\tiny, align=center,fill=white] {LIKES} (post1);
\draw[arrow] (john) -- node[font=\tiny, align=center,fill=white] {LIKES} (post2);
\draw[arrow] (bob) -- node[ font=\tiny, align=center,fill=white] {WORKS\_AT\\\textit{from: 2000}} (org);
\draw[arrow] (org) -- node[above, font=\tiny, align=center] {LOCATED\_IN} (place);
\draw[arrow] (john) -- node[ font=\tiny, fill=white, inner sep=1pt, align=center] {LOCATED\_IN\\\textit{from: 2025}} (place);

\foreach \n in {b1,b2,b3} \draw[arrow2] (\n) -- (bob);
\foreach \n in {a1,a2,a3} \draw[arrow2] (\n) -- (alice);
\foreach \n in {j1,j2,j3} \draw[arrow2] (\n) -- (john);
\foreach \n in {o1,o2} \draw[arrow2] (\n) -- (org);
\draw[arrow2] (pl1) -- (place);
\draw[arrow2] (p1) -- (post1);
\draw[arrow2] (p2) -- (post2);

\end{tikzpicture}
}
\caption{Example Property Graph.}
\label{fig:pg-acm}
\vspace{-4mm}

\end{figure}
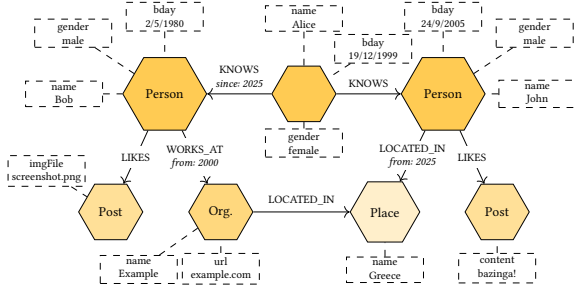

A property graph consists of nodes, edges, and associated properties stored as key-value pairs. Nodes represent entities or objects (e.g., \texttt{people, places}), edges define relationships between nodes (e.g.,  \texttt{is\_located\_in, works\_at}), and properties characterize both nodes and edges (e.g., \texttt{name, age, date}).

Although the concept of PGs was first introduced by Rodrigues and Neubauer~\cite{rodriguez2010constructions}, several works have attempted to provide either informal (e.g.,~\cite{Neo4jPlatform,oraclePropertyGraph,DBLP:journals/corr/abs-1909-04881}) or formal (e.g., \cite{DBLP:conf/icde/AlotaibiLQEO21,angles2018property,DBLP:journals/csur/AnglesABHRV17,DBLP:journals/pacmmod/AnglesBD0GHLLMM23,bonifati2018querying,hartig2014reconciliation,DBLP:conf/grades/HartigH19,angles2017foundations,DBLP:journals/ase/SharmaS22}) definitions of the property graph data model. 

\mpara{Theoretical concepts.}
In this work, we employ the formal PG-Schema model~\cite{DBLP:journals/pacmmod/AnglesBD0GHLLMM23}, which distinguishes between the data graph and its schema graph. 
PG-Schema builds upon the foundations of the property graph data model and PG-Keys~\cite{DBLP:conf/sigmod/AnglesBDFHHLLLM21} by introducing \emph{keys} and their \emph{constraints}. 
Henceforth, we use the term \emph{property keys} exclusively for attribute names (e.g.,  \texttt{gender}, \texttt{name}, \texttt{since}), and we refer to PG-Keys elements as \emph{constraints}.\footnote{By contrast, PG-Keys~\cite{DBLP:conf/sigmod/AnglesBDFHHLLLM21} introduces \emph{keys}, which refer to schema level identification and constraints that assign uniqueness, mandatory properties, edges, or paths.}
PG-Schema uses GQL’s~\cite{graphql-spec-2021} predefined data types, but it can definitely be extended to support a wider range of data types like \texttt{STRING}, \texttt{BOOLEAN}, \texttt{INT}, \texttt{DOUBLE}, \texttt{TIMESTAMP}, or \texttt{DATE,} and can annotate the properties as \texttt{MANDATORY} or \texttt{OPTIONAL} to capture completeness.
In our context, assuming $K$ is a finite set of property keys, $P$ a set of property values, and $\mathcal{DT}$ a set of the aforementioned data types, we define a property graph, and an node/edge type as follows:

\begin{definition}[Property Graph~\cite{DBLP:journals/pacmmod/AnglesBD0GHLLMM23}]\label{def:pg} 
A property graph is a tuple $G = (V, E, \rho, \lambda, \pi),$ where
\begin{itemize}[noitemsep, topsep=0pt, left=0pt]
    \item $V$ and $E$ are disjoint finite sets of nodes and edges, respectively,
    \item $\rho : E \xrightarrow{} (V \times V)$ is a total function mapping edges to ordered pairs of nodes,
    \item $\lambda : (V \cup E) \xrightarrow{} 2 ^ \mathcal{L}$ is a partial function assigning to each node and edge a finite set of labels, 
          and
        \item $\pi : (V \cup E) \times K \xrightarrow{} P$ is a partial function assigning to each node or edge a finite set of key–value pairs $(k,p)$ with $k \in K$ and $p \in P$.
\end{itemize}
\end{definition}

\begin{definition}[Node Type]\label{def:node_type}
A node type is a tuple $V_s = (\lambda_n, \pi_n)$, where $\lambda_n : V_s \xrightarrow{} 2 ^ \mathcal{L}$ is a finite set of labels, and $\pi_n : V_s \times  \mathcal{K} \to \mathcal{DT} \times \{ \texttt{MANDATORY}, \texttt{OPTIONAL}\}$ assigns to each property key its data type.
\end{definition}

\begin{definition}[Edge Type]\label{def:edge_type}
An edge type is a tuple $E_s = (\lambda_e, \pi_e, \\\rho_e, \mathcal{C}_e)$, where: 
\begin{itemize}[noitemsep, topsep=0pt, left=0pt]
    \item  $\lambda_e : E_s \xrightarrow{} 2 ^ \mathcal{L}$ is a finite set of labels,
    \item $\pi_e : E_s \times  \mathcal{K} \to \mathcal{DT} \times \{ \texttt{MANDATORY}, \texttt{OPTIONAL}\}$ assigns to each  property key
    its data type and if its mandatory or optional,
    \item $\rho_e = (V_ss, V_st)$ is the ordered pair of source and target node types associated with the edge type,
    \item \( \mathcal{C} : E_s \to N_1 \times N_2 \) 
    assigns to each edge type a pair of positive integers $N_1,N_2 \in \mathbb{N} = \{0,1,2, \dots \}$, representing the cardinality constraints from source to target; defaults to $N$, if $ N > 1$ .
\end{itemize}
\end{definition}

\noindent Henceforth, we refer to \emph{types} as both nodes and edge types. Then, we assemble the aforementioned and construct the schema as:



\begin{definition}[Schema Graph]\label{def:schema_graph}
A schema graph is a tuple  $S_G = (V_s, E_s, \rho_s)$, where:
    \( V_s \) and \( E_s \) are disjoint finite sets of node types and edge types, and
    \( \rho_s : E_s \to (V_s \times V_s) \) is a total function mapping edge types to ordered pairs of node types. 
\end{definition}


\examplebox{
\begin{exmp}\myref{Figure}{fig:pg-acm} demonstrates a simple property graph with node types: \texttt{Person}, \texttt{Organization}, \texttt{Post}, \texttt{Place}, and edge types: \texttt{KNOWS}, \texttt{LIKES}, \texttt{WORKS\_AT}, and 
\texttt{LOCATED\_IN}. At the data level, nodes like \textit{Alice}, \textit{Bob}, 
and \textit{John} instantiate these types, while the node \textit{Alice} appears 
without a label, illustrating unlabeled instances. Edges such as 
\texttt{KNOWS}(\textit{Alice}, \textit{John}) or 
\texttt{WORKS\_AT}(\textit{Bob}, \textit{Organization}) represent 
relationships, and attributes (e.g., \texttt{name}, \texttt{gender}, 
\texttt{bday}) appear as dashed literal boxes.

\end{exmp}
}

Different works interpret types in PGs differently. Some distinguish types explicitly by labels, the set of labels, or by their semantic meaning. In this paper, we adopt the PG-Schema~\cite{DBLP:journals/pacmmod/AnglesBD0GHLLMM23} model and define the types formally as in \myref{Def.}{def:node_type} and \myref{Def.}{def:edge_type}, allowing different combinations of labels to correspond to different types.
Therefore, we rely on patterns \myref{Def.}{def:node_pattern} and \myref{Def.}{def:edge_pattern} to capture the variety of representations found in the datasets and support schema inference even in the absence of explicit or consistent label annotation. A type might be associated with multiple \emph{patterns}, allowing us to capture and handle noisy or incomplete data.

\begin{definition}[Node Pattern]\label{def:node_pattern} 
Let $\mathcal{L}$ be the set of labels and $\mathcal{K}$ the set of property keys in a property graph. 
A node pattern is a tuple $T_{Np} = (L, K)$, where  $L \subseteq \mathcal{L}$ is a set of labels and $K \subseteq \mathcal{K}$ is a set of property keys. 
Two node patterns $T_{Np}=(L,K)$ and $T_{Np'}=(L',K')$ are distinct, if $L \neq L'$ and $K \neq K'$. 
\end{definition}

\begin{definition}[Edge Pattern]\label{def:edge_pattern} 
An edge pattern is a tuple $T_{Ep} = (L, K, R)$, where  $L \subseteq \mathcal{L}$ is a set of labels of an edge, $K \subseteq \mathcal{K}$ is a set of property keys of the edge, and  $R = (L_s, L_t)$ where $L_s, L_t \subseteq \mathcal{L}$ specifies the source and target node label, respectively. 
Two edge patterns $T_{Ep}=(L,K,R)$ and $T_{Ep'}=(L',K',R')$ are distinct, if $L \neq L'$, $K \neq K'$ and $R \neq R'$. \end{definition}

  

\examplebox{
\begin{exmp}
Based on \myref{Def.}{def:node_pattern},\myref{}{def:edge_pattern}, the patterns of the example in \myref{Figure}{fig:pg-acm} are the following:
\\Node patterns:
\begin{itemize}[noitemsep, topsep=0pt, left=0pt]
    \item $T_{Np1} =(\{\textbf{Person}\}, \{\texttt{name}, \texttt{gender}, \texttt{bday}\})$
    \item $T_{Np2} =(\{\textbf{}\}, \{\texttt{name}, \texttt{gender}, \texttt{bday}\})$
    \item $T_{Np3} =(\{\textbf{Org.}\}, \{\texttt{name}, \texttt{url}\})$
    \item $T_{Np4} =(\{\textbf{Post}\}, \{\texttt{imgFile}\})$,
    \item $T_{Np5} =(\{\textbf{Post}\}, \{\texttt{content}\})$
    \item $T_{Np6} =(\{\textbf{Place}\}, \{\texttt{name}\})$
\end{itemize}
Edge Patterns:
\begin{itemize}[noitemsep, topsep=0pt, left=0pt]
    \item $T_{Ep1} =(\{\textbf{KNOWS}\}, \{\texttt{since}\},(\{\textbf{Person}\}, \{\textbf{Person}\}))$
    \item $T_{Ep2} =(\{\textbf{KNOWS}\}, \{\},(\{\textbf{Person}\}, \{\textbf{Person}\}))$
    \item $T_{Ep3} =(\{\textbf{LIKES}\}, \{\},(\{\textbf{Person}\}, \{\textbf{Post}\}))$
    \item $T_{Ep4} =(\{\textbf{WORKS\_AT}\}, \{\texttt{from}\},(\{\textbf{Person}\}, \{\textbf{Org.}\}))$
    \item $T_{Ep5} =(\{\textbf{LOCATED\_IN}\}, \{\},(\{\textbf{Org.}\}, \{\textbf{Place}\}))$
    \item $T_{Ep6} =(\{\textbf{LOCATED\_IN}\}, \{\texttt{from}\},(\{\textbf{Person}\}, \{\textbf{Place}\}))$
   
\end{itemize}
\end{exmp}
}

We use the term \emph{patterns} to refer to both node and edge patterns.
Multiple patterns can correspond to the same type, 
when their associated set of properties differ; in the case of edges, this may also occur when their source or target label sets differ.
We refer to the set of all such patterns of a type $\tau \in V_s \cup E_s$ as \emph{type patterns} of $\tau$, formally defined as $P_n= \{(L,K) \mid L = \lambda(\tau)\}$ and $P_e = \{(L,K,R) \;\mid\; L = \lambda(\tau), \; R = (\lambda(s), \lambda(t))\}$.
Thus, two instances with the same labels $L$ but different property sets $K$; and, for edges, the same endpoints $R$, correspond to the same type but belong to different patterns. 



\mpara{The schema discovery problem in PGs.} 
\label{sec:schema_issues}
The problem of schema discovery in property graphs can be informally described as:
\textit{``Given a property graph $G$ of arbitrary size and structure, with missing type information, heterogeneous properties, and frequent updates, infer the schema graph $S_G$ efficiently and accurately.''} Several challenges arise in this context:
\begin{itemize}[leftmargin=*,topsep=1pt]
\item \textbf{Label heterogeneity and ambiguity.}
Labels are often used inconsistently~\cite{dallachiesa2019improving,zhou2005learning,ji2011ranking}. Multiple labels may describe the same type (e.g., \texttt{Actor} and \texttt{Person}), or a single label may represent different entities (e.g., \texttt{Actor} as \texttt{Person} or \texttt{Organization}). Structural similarity alone cannot capture such semantic relations (e.g., \texttt{Intern} as a subtype of \texttt{Employee}).

\item \textbf{Efficiency.}
Na\"ive pairwise node comparisons are computationally prohibitive for large graphs.

\item \textbf{Evolving datasets.}
Since real-world graphs evolve, recomputing schemas from scratch is inefficient. Incremental updates are required to maintain performance.

\item \textbf{Schema constraint level.}
Schemas must balance strictness and flexibility~\cite{pandey2025towards}. Strict schemas enforce structure but fail on noisy data, while loose schemas allow variation but reduce precision. PG-Schema~\cite{DBLP:journals/pacmmod/AnglesBD0GHLLMM23} captures this trade-off with \texttt{STRICT} and \texttt{LOOSE} modes.

\end{itemize}

\begin{figure}[h]
\centering
\tikzset{
  bigbox/.style={draw, rounded corners=3pt, align=center, minimum width=30mm, minimum height=11mm, very thick,
                 fill=gray!20},
  smallbox/.style={draw, rounded corners=3pt, align=center, minimum width=30mm, minimum height=9mm, thick, fill=white},
  arrow/.style={-Latex, very thick},
  thinarrow/.style={-Latex, thick},
  dropline/.style={dash pattern=on 2pt off 2pt, thick},
  postframe/.style={draw, rounded corners=4pt},
  labeltitle/.style={font=\bfseries}
}
\resizebox{\columnwidth}{!}{%
\begin{tikzpicture}[node distance=4mm and 4mm]

\node[bigbox]                           (load)    {(\textbf{a}) \textbf{Data Load}\\Nodes, Edges, Properties};
\node[bigbox, right=of load]            (prep)    {(\textbf{b}) \textbf{Preprocess}\\Representation Vectors};
\node[bigbox, right=of prep]            (cluster) {(\textbf{c}) \textbf{Clustering}\\Nodes \& Edges};
\node[bigbox, right=of cluster]         (types)   {(\textbf{d}) \textbf{Type Extraction}\\Merge by labels or\\Jaccard for unlabeled};

\node[smallbox, below=8mm of types, xshift=-87mm] (constraints)
  {(\textbf{e}) \textbf{Property Constraints}\\mandatory / optional};
\node[smallbox, right=3mm of constraints] (dtypes)
  {(\textbf{f}) \textbf{Data Type Inference}\\STRING, INTEGER, DATE, \dots};
\node[smallbox, right=3mm of dtypes] (cards)
  {(\textbf{g}) \textbf{Cardinalities}\\1:1,\;1:N,\;M:N,\dots};

\node[postframe, fit={(constraints) (cards)}, inner sep=16pt] (post) {};
\node[labeltitle] at ($(post.north)+(0,-3mm)$) {Post Processing (optional)};
\node[bigbox, below=5mm of post.south] (schema)
  {(\textbf{h}) \textbf{Schema Serialization}\\PG-Schema \& XSD};

\draw[arrow] (load) -- (prep);
\draw[arrow] (prep) -- (cluster);
\draw[arrow] (cluster) -- (types);

\draw[arrow] (types.east) -- ++(3mm,0) -- ++(0,-19mm) --  (post.east);

\draw[arrow] (post) -- (schema);

\end{tikzpicture}
}
\vspace{-6mm}
\caption{PG-HIVE process.}
\label{fig:workflow}
\vspace{-4mm}
\end{figure}
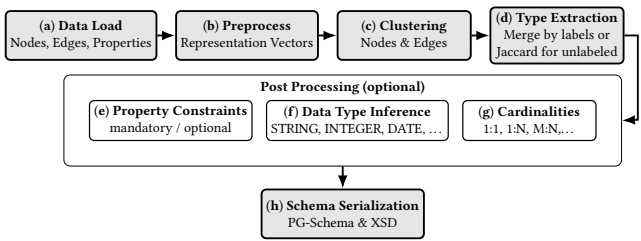

\section{The PG-HIVE Approach}
\label{sec:pg-hive}

To address the aforementioned limitations, we introduce PG-HIVE. 
Its pipeline in \myref{Figure}{fig:workflow} includes: \filledcircnum{a} data loading, \filledcircnum{b} preprocessing, \filledcircnum{c} clustering to group similar nodes and edges, \filledcircnum{d} extracting 
schema types, \filledcircnum{e} inferring property constraints, \filledcircnum{f} property data types, \filledcircnum{g} cardinalities,  and \filledcircnum{h} serializing the schema in XSD~\cite{DBLP:journals/corr/abs-1202-4532} and a PG-Schema \cite{DBLP:journals/pacmmod/AnglesBD0GHLLMM23}. Optionally, this 
pipeline can be executed incrementally. 
\vspace{-3mm}

\begin{algorithm}[h]
\small
\SetAlgoLined
\SetKwIF{If}{ElseIf}{Else}
  {\textbf{if}}{\textbf{ then}}
  {\textbf{ else if}}{\textbf{ else}}
  {}

\SetKwFor{ForEach}{\textbf{for each}}{\textbf{ do}}{}
\SetKwFor{For}{\textbf{for}}{\textbf{ do}}{}
\SetKwFor{While}{\textbf{while}}{\textbf{ do}}{}

\caption{PG-HIVE Schema Discovery}
\label{alg:incremental_pghive}

\KwIn{
PG stream $G = \{Gs_{1}, Gs_{2}, \dots, Gs_{n}\}$, 
Jaccard similarity threshold $\theta \in [0,1]$, 
Boolean flag \texttt{postProcessing}
}
\KwOut{Final schema graph $S_G$}

\SetKwFunction{Load}{loadNodesAndEdges}
\SetKwFunction{Pre}{preprocess}
\SetKwFunction{LSH}{LSHClustering}
\SetKwFunction{Merge}{extractTypes}
\SetKwFunction{InferProps}{inferPropertyConstraints}
\SetKwFunction{InferDT}{inferDataTypes}
\SetKwFunction{InferCard}{computeCardinalities}
\SetKwFunction{Update}{updateSchema}

$S_G \gets \emptyset$

\ForEach(){$Gs_{i} \in G$}{
    $D \gets \Load(Gs_{i})$ \\
    $X,b,T \gets \Pre(D)$ 
    
    $\mathcal{C} \gets \LSH(X, b, T)$
    
    $S'_G \gets \Merge(\mathcal{C}, S_G, \theta =0.9)$ \tcp*{\myref{Algorithm}{alg:mergetypes}}
    
    \If{\texttt{postProcessing} \textbf{or} $i = n$}{
        {
        \InferProps{$S'_G$} \\ 
        \InferDT{$S'_G$} \\
        \InferCard{$S'_G$}
        }
    }
    
    $S_G \gets \Update(S'_G)$
}
\Return $S_G$
\end{algorithm}

\vspace{-2mm}

\myref{Algorithm}{alg:incremental_pghive} outlines the process, handling the input graph $G$ in batches ($Gs_1, \ldots, Gs_n$). The main pipeline (Lines 3-6) is executed for each batch, with optional post-processing (Lines 7-9).
The algorithm starts with loading the nodes and edges, and creating the representative vectors (Lines 3-4).
Each node's vector consists of a fixed-dimensional Word2Vec embedding of its labels and a binary vector indicating the presence or absence of each property. Similarly, the edges have three fixed-dimensional Word2Vec embeddings (one for its label and two for the source and target labels), as well as the binary vector indicating its properties. 
The vectors are used then for the LSH clustering (Line 5). An important component is the type extraction step (Line 6), where the types inferred in the current batch are merged and integrated into the final schema $S_G$. 
If a new cluster has the same label as an existing type, it merges with the corresponding type. For unlabeled clusters, we compute the Jaccard similarity between their properties against those of existing types. The remaining unmerged clusters are appended to the new schema as new \texttt{ABSTRACT}~\cite{DBLP:journals/pacmmod/AnglesBD0GHLLMM23} types. Optionally, post-processing steps can be applied after each iteration to infer constraints and cardinalities. In the sequel, we analyze in detail each step.

\subsection{Data Loading \& Representation} 
\label{sec:load}
Firstly, we load the nodes and edges, plus their properties, from a PG storage system (e.g., Neo4j~\cite{Neo4jPlatform})
and transform them into a vector representation (e.g., 
DataFrame~\cite{armbrust2015spark,zaharia2010spark}), using a single query to ensure similar structure. Let $G$ be our property graph, where  $V = \{v_1,\dots, v_n\}$ denotes the set of nodes and  $E = \{e_1, \dots, e_m\}$ the set of edges in the graph. Each node $v_i$ has a set of properties $\mathcal{K}_{v_i} = \{k_1, \dots, k_k\}$, and each edge $e_j$ a set of properties $\mathcal{K}_{e_j} = \{k_1, \dots, k_l\}$. Edges are also associated with a pair of nodes $(V_s, V_t)$ (\myref{Def.}{def:pg}).

\noindent\textbf{Representation.}
\label{sec:preprocessing}
We transform the data into a machine-readable format (e.g., one-hot encoding) that serves as input to the clustering process.
Each node $v_i$ is described by a vector $\mathbf{f}{v_i} \in \mathbb{R}^{d + K}$, where $d$ is the dimension of the Word2Vec~\cite{mikolov2013efficient} embedding, that will represent the label(s) of each node and $K$ the total number of distinct properties of our dataset. The vector $\mathbf{f}{v_i}$ is the concatenation of the embedding $\mathbf{w}_{v_i} \in \mathbb{R}^d$ and the binary vector $\mathbf{b}_{v_i} \in \{0,1\}^K$ that indicates the presence or absence of each property in each instance. We train a Word2Vec model on the set of node and edge labels observed in the dataset to ensure consistent semantic embeddings across identical label sets.
Similarly, each edge $e_j$ is represented by a vector $\mathbf{f}{e_j} \in \mathbb{R}^{3d + Q}$, composed of three Word2Vec embeddings $\mathbf{w}_{e_j}, \mathbf{w}_{src_j}, \mathbf{w}_{tgt_j} \in \mathbb{R}^d$, for the edge type, the source node type, and the target node type, respectively, and a binary vector $\mathbf{b}_{e_j} \in \{0,1\}^Q$, for the edge properties, where $Q$ is the total number of distinct edge properties of our dataset. 

In the case where the label is absent, we use a zero vector of the same size as the Word2Vec vector (\myref{Example}{ex:representation}). On the other hand, if we have multiple labels on an instance, we sort them alphabetically for uniformity and then concatenate them as one and finally transform them as the rest. 
These hybrid vectors,
combine both the semantic information of the labels and the structural representation for nodes and edges~\cite{bishop2006pattern}. This representation prevents semantically different nodes, or edges, from being merged due to their same structure. 
On the other hand, when an instance has multiple labels, we assume the sorted concatenation of them as a unique label. This helps identify types (nodes or edges) with multiple labels, by having the same Word2Vec if they have multiple same labels or if some of the labels are different.

\examplebox{
\begin{exmp}
\label{ex:representation}
Let's consider the global property keys set 
$\mathcal{K}_n = \{\texttt{imgFile}, \texttt{content}, \texttt{name}, \texttt{url}, \texttt{bday}, \texttt{gender}\}$.
The node ``Bob'' of type \textbf{Person} with properties 
\{\texttt{name}, \texttt{gender}, \texttt{bday}\}, in a Word2Vec size of 5 is represented as the concatenation of:
\vspace{-1mm}
\[ \underbrace{[0.12, 0.85, -0.33, 0.47, 0.19]}_{\text{Word2Vec(``Person'')}} \;\; \Vert \;\; 
\underbrace{[0,0,1,0,1,1]}_{\text{binary props}} \]

For an unlabeled node like ``Alice'', the representation would be like:
\[\underbrace{[0, 0, 0, 0, 0]}_{\text{Word2Vec (unlabeled)}} 
 \;\; \Vert \;\; 
 \underbrace{[1, 0, 1, 0, 1, 1]}_{\text{binary props}}
 \]

Edges have the set of properties $\mathcal{K}_e = \{\texttt{since}, \texttt{from}\}$. We represent ``WORKS\_AT'' from \textbf{Person} to \textbf{Org.} with property \{\texttt{from}\} as: 
\[
\underbrace{[0.44,\,-0.11,\,0.93,\,0.05,\,-0.27]}_{\text{Word2Vec(``WORKS\_AT'')}}
\;\Vert\;
\underbrace{[0.12,\,0.85,\,-0.33,\,0.47,\,0.18]}_{\text{Word2Vec(``Person'')}}
\]
\[
\underbrace{[0.66,\,0.10,\,-0.21,\,0.73,\,-0.08]}_{\text{Word2Vec(``Org.'')}}
\;\Vert\;
\underbrace{[0,1]}_{\text{binary props}}
\]
\end{exmp}
}

\subsection{Clustering}
\label{sec:clustering}
To identify the node and edge types, we employ Locality-Sensitive Hashing (LSH)~\cite{DBLP:books/cu/LeskovecRU14,leskovec2020mining,gionis1999similarity,indyk1998approximate} for clustering their vector representation. LSH has been utilized for several applications such as nearest neighbour search~\cite{ferhatosmanoglu2001approximate,houle2005fast,indyk1998approximate,tao2010efficient}, document clustering~\cite{broder1998min,broder1997resemblance}, and RDF data management~\cite{alucc2019building,kardoulakis2021hint}. In our context, we use LSH to cluster nodes and edges with highly similar vectors,
indicating that they belong to the same type.
LSH is particularly well-suited for this task as it approximates similarity across large datasets without requiring expensive pairwise comparisons, thus ensuring efficiency. 

We examine two LSH approaches: Euclidean LSH (ELSH)~\cite{datar2004locality,DBLP:books/cu/LeskovecRU14,spark_mllib_lsh} 
also known as $p$-stable or bucketed random projections, which is tailored to $\ell_2$ distance, and MinHash LSH~\cite{leskovec2020mining,spark_mllib_lsh,christen2020building}, used for Jaccard similarity over sets. 
Although they can be applied better to different cases~\cite{wang2017survey}, in our heterogeneous context, they are both efficient (\myref{\S}{sec:evaluation})~\cite{jafari2021survey}.

\mpara{ELSH.} 
ELSH~\cite{DBLP:books/cu/LeskovecRU14} is particularly suitable for our feature vectors, which combine label embeddings with binary property indicators. 
These vectors appear in a high-dimensional numeric space, where Euclidean distance works well. 
ELSH has two parameters~\cite{spark_mllib_lsh,apache_spark_lsh}: 
a) the \textit{bucket length} $b>0$, which controls the width of the hash buckets and thus the granularity of similarity (larger bucket $\Rightarrow$ more collisions, higher recall but lower precision), and 
b) the \textit{number of hash tables} $T \in \mathbb{N}$, which affects recall (higher $T$ $\Rightarrow$ more chances to collide) but increases runtime. 
ELSH is effective for real-valued embeddings, but requires parameter tuning to balance precision and recall~\cite{wang2017survey,jafari2021survey}.

\mpara{MinHash.}
MinHash LSH~\cite{leskovec2020mining}, approximates the Jaccard similarity between sets. 
It creates signatures for each set so the probability of two sets to collide in a hash function is equal to their Jaccard similarity. 
In our context, when nodes and edges are modeled simply as sets of properties, MinHash provides a simple similarity approach. 
As a parameter, MinHash only requires the number of hash tables $T$~\cite{spark_mllib_lsh,apache_spark_lsh}, which affects the trade-offs between recall and efficiency. 
MinHash performs well for sparse, set-like data, but it is less suitable when vectors include continuous components such as embeddings~\cite{wang2017survey,jafari2021survey}.

\mpara{Collision probabilities and parameter effects.} Each dataset requires different parameters to achieve the best possible clustering~\cite{njoku2024finding}, as it depends on the structural and statistical characteristics of the graph. 
The labels and the sparsity of properties affect the optimal bucket length and number of hash tables. 
For instance, denser graphs with many overlapping properties benefit from finer bucket lengths to better separate similar types, while sparser graphs require longer buckets to avoid mixing types. 
Similarly, the number of hash tables balances recall with computational efficiency.

Let $r_{\text{in}}$ be the distance between items of the same type and $r_{\text{out}}>r_{\text{in}}$ the distance between different types.  
For \textbf{ELSH}, the collision probability of two points with distance $d$ in one table is $p_b(d)$, a decreasing function of $d$ determined by the bucket length $b$~\cite{datar2004locality}. 
With $T$ independent hash tables combined under the OR rule, the probability that two vectors collide in at least one table is
$P_{b,T}(d) \;=\; 1-\bigl(1-p_b(d)\bigr)^T$ ~\cite{leskovec2020mining}, which decreases as $d$ grows and increases with $T$. 
Hence, decreasing $b$ (narrower buckets) or increasing $T$ increases selectivity and so improves the separation between
${P_{b,T}(r_{\text{in}})} $ and ${P_{b,T}(r_{\text{out}})}$,
leading to clusters with fewer mixed types (i.e., fewer false negatives)~\cite{gionis1999similarity,huang2015query}.

For \textbf{MinHash}, the collision probability for two sets $A,B$, is the probability that their signatures agree in one hash function equals their Jaccard similarity, $\Pr[h(A)=h(B)] = J(A,B) = \frac{|A\cap B|}{|A\cup B|}$~\cite{leskovec2020mining}.
With $T$ hash functions, the collision rate remains $J(A,B)$, and this estimation becomes more reliable as $T$ increases.
In practice, similar sets collide often, and dissimilar ones rarely, making MinHash a simple similarity approach.


\mpara{Adaptive parameterization.} 
To take into account the heterogeneity of property graphs and avoid manual tuning of LSH, we introduce an adaptive strategy for selecting the key parameters $b$ and $T$, leveraging the characteristics of the dataset. 

Before clustering, we examine a small portion of the graph to infer key characteristics, like how sparse the dataset is or how many labels the dataset has. This helps us guide the parameter selection for  clustering. More specifically, we randomly sample 1\% of the graph, or at least 10k nodes (whichever is larger), and compute the Euclidean distances between the sampled elements and take their average as the distance scale $\mu$. 
This ensures that the bucket length is adapted to the actual distance of the data.
Then, we set the bucket width proportional to  $\mu$, as: $b_{\text{base}}= 1.2 \cdot \mu$. The factor $1.2$ avoids overfragmentation when the sample distance is small.

To also consider the number of labels, we refine $b$ using the number of distinct labels $L$ observed in the dataset:  $b =  b_{\text{base}}\cdot \alpha$, where $\alpha=0.8$ for $L \leq 3$, $\alpha=1.0$ for $4 \leq L \leq 10$, and $\alpha=1.5$ for $L > 10$. This heuristic reflects that graphs with few labels require tighter buckets to keep types distinct, while graphs with many labels benefit from wider buckets to avoid overfragmentation.  At this point, we prefer more separate types, as we are going to perform a merging step afterwards (\myref{\S}{sec:merging}). 

The number of hash functions is scaled according to the dataset size $N$ and then adjusted according to the number of labels $L$. Smaller graphs use fewer hash functions, while larger graphs demand more to distinguish the different types. Graphs with few labels need more hash functions to prevent mixed elements from being grouped, while graphs with many labels can separate from the label information and therefore need fewer functions.
The value of $T$ is heuristically determined by $T = b_{\text{base}} \cdot\max(5, \alpha \cdot \min(25, \log_{10} N))$, for nodes and   $T = b_{\text{base}} \cdot\max(3, \alpha \cdot \min(20, \log_{10} E))$ for edges.

These heuristics ensure a balance between precision and recall across graphs of varying size and label diversity, while avoiding both over-fragmentation and excessive merging. Regardless of the adaptive approach, users can always provide their own LSH parameters, that suit the respective dataset.

\mpara{Practical ranges.}
Empirically for our experiments, $\alpha\in[0.5,2]$ and $T\in[15,35]$ work well across
datasets; edges benefit from slightly smaller $\alpha\in[0.5,1.5]$, due to smaller vector representations, however, we explore the space of alternatives in (\myref{\S}{sec:evaluation}).
 
\mpara{Clustering result.} 
A cluster is a group of nodes or edges that have similar structural characteristics, such as properties and labels. From these structural similarities (i.e., patterns \myref{Def.}{def:node_pattern},\myref{}{def:edge_pattern}), we define a type pattern from the cluster representative.
Additionally, we maintain the associated instances of each cluster that represent this type. 
This way, we specify a type pattern and collect the set of instances assigned to it.

\mpara{Cluster representative.}
Each cluster $C$ (of nodes or edges) is summarized by a \emph{representative pattern} $\mathit{rep}(C) = (L,K,R)$, where:
\begin{itemize}[noitemsep, topsep=0pt, left=0pt]
    \item $L$ is the set of all labels that appear in at least one instance of $C$,
    \item $R =(L_s,L_t)$ for edges, is the set of all labels that appear in at least one instance on the source and target label sets, respectively, and
    \item $P$ is the set of properties observed across the instances of $C$.
\end{itemize}
We refer to the type patterns emerging form the clustering process, each representing a potential schema element, as \emph{candidate types}. We consider our candidate types (node or edges) $T$ as the cluster $C$ representative itself. These clusters form the basis for the type inference, as they capture the structure of the data, and they are used in the next steps.

\examplebox{
\begin{exmp}
Nodes ``Bob'' and ``John'' (\myref{Figure}{fig:pg-acm}) both have the label \texttt{Person} and have the same structure (\texttt{name}, \texttt{gender}, \texttt{bday}), so they are assigned together. The unlabeled node ``Alice'' has the same property vector, but it is not placed in the same cluster.
The two \texttt{Post} nodes are initially assigned in different clusters due to structural differences—one includes an \texttt{imgFile} property while the other a \texttt{content}.
\texttt{LIKES} edges are grouped together, as they consistently link a \texttt{Person} node to a \texttt{Post}.
\end{exmp}
}

\subsection{Extracting Types}
\label{sec:merging}

After clustering, we refine the candidate node and edge types by merging clusters that correspond to the same schema type. 
We distinguish two cases. If a cluster has at least one label, we treat it as \emph{labeled}, otherwise as \emph{unlabeled}, so we can handle them differently and avoid mixing different types. An overview of the heuristics used is described in \myref{Algorithm}{alg:mergetypes}.

\mpara{Merging labeled clusters.}  
Node and edge clusters that have the same label(s) are merged directly, as they represent the same type, and preserve their information (\myref{Lemma}{lemma:monotonicity_nodes} and \myref{Lemma}{lemma:monotonicity_edges}).

\mpara{Merging unlabeled clusters.}  
For unlabeled clusters, we compare them against labeled ones using the Jaccard similarity of their property sets.  
Given two clusters $C_1, C_2$ with type patterns
$T_1=(\mathcal{L}_1,\mathcal{K}_1)$ and $T_2=(\mathcal{L}_2,\mathcal{K}_2)$, the similarity is
$
J(C_1, C_2) = \frac{|\,\mathcal{K}_1 \cap \mathcal{K}_2\,|}{|\,\mathcal{K}_1 \cup \mathcal{K}_2\,|}.
$
If $J(C_1,C_2) \geq \theta$ (we set $\theta=0.9$), they are merged into
$
T_M = (\mathcal{L}_1 \cup \mathcal{L}_2,\; \mathcal{K}_1 \cup \mathcal{K}_2).
$
We use a high similarity threshold to avoid over-merging; lowering $\theta$ would increase recall but mix types and will decrease precision.

\begin{lemma}[Monotonicity of merging node types]
\label{lemma:monotonicity_nodes}
Let $T_{N1}=(\mathcal{L}_1,\\\mathcal{K}_1)$,
$T_{N2}=(\mathcal{L}_2,\mathcal{K}_2)$, and
$T_{NM}=(\mathcal{L}_1\cup\mathcal{L}_2,\;\mathcal{K}_1\cup\mathcal{K}_2)$
be the merge of $T_{N1}$ and $T_{N2}$. Then,
$\mathcal{K}_i\subseteq \mathcal{K}_M$ and $\mathcal{L}_i\subseteq \mathcal{L}_M$ for $i\in\{1,2\}$.
Thus, no node property or node label is lost.
\end{lemma}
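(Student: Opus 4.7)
The plan is to prove the lemma directly from the definition of the merge operation, which is essentially a set-theoretic identity. The merge constructs $T_{NM}$ by setting $\mathcal{L}_M := \mathcal{L}_1 \cup \mathcal{L}_2$ and $\mathcal{K}_M := \mathcal{K}_1 \cup \mathcal{K}_2$, so the four required inclusions reduce to the elementary fact that $A \subseteq A \cup B$ and $B \subseteq A \cup B$ for any sets $A,B$.

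I would structure the argument as follows. First, fix an arbitrary label $\ell \in \mathcal{L}_1$; by the definition of set union, $\ell \in \mathcal{L}_1 \cup \mathcal{L}_2 = \mathcal{L}_M$, which yields $\mathcal{L}_1 \subseteq \mathcal{L}_M$. Swapping the roles of the indices gives $\mathcal{L}_2 \subseteq \mathcal{L}_M$. The identical argument, repeated on the property-key sets, establishes $\mathcal{K}_1 \subseteq \mathcal{K}_M$ and $\mathcal{K}_2 \subseteq \mathcal{K}_M$. Combining the four inclusions produces the desired monotonicity property, and rephrasing it as a preservation statement gives exactly the claim that no node label and no property key of either input type is dropped by the merge.

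Because the statement is purely set-theoretic, there is no genuine obstacle to overcome. The only point I would flag for the reader is a scoping clarification: Definition 4 endows property keys with datatypes and mandatory/optional annotations via $\pi_n$, whereas Lemma 1 is phrased at the level of the type patterns $(L,K)$ from Definition 8, which abstract away from $\pi_n$. Thus the lemma does not claim preservation of datatype or cardinality annotations; any reconciliation of those belongs to the post-processing step described in Section 4. Making this scope explicit avoids the misleading impression that a compatibility check between $\pi_{n,1}$ and $\pi_{n,2}$ is needed inside the merge, and justifies why the proof remains a one-line consequence of $\cup$-monotonicity.
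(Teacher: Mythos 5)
Your proposal is correct and follows essentially the same route as the paper, which likewise proves the lemma ``by construction'' from $\mathcal{L}_M=\mathcal{L}_1\cup\mathcal{L}_2$ and $\mathcal{K}_M=\mathcal{K}_1\cup\mathcal{K}_2$ together with the elementary inclusion $A\subseteq A\cup B$. Your added scoping remark about datatype annotations being outside the lemma's claim is a reasonable clarification but not part of the paper's (one-line) argument.
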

\begin{proof}
By construction:
$\mathcal{L}_M=\mathcal{L}_1\cup\mathcal{L}_2$,
$\mathcal{K}_{e,M}=\mathcal{K}_{e,1}\cup\mathcal{K}_{e,2}$.
\end{proof}

\begin{lemma}[Monotonicity of merging edge types]
\label{lemma:monotonicity_edges}
Let $T_{E1}=(\mathcal{L}_1, \\ \mathcal{K}_{e,1},\mathcal{R}_1)$,
$T_{E2}=(\mathcal{L}_2,\mathcal{K}_{e,2},\mathcal{R}_2)$ with
$\mathcal{R}_1=(\mathcal{L}_{s,1},\mathcal{L}_{t,1})$, and $\mathcal{R}_2=(\mathcal{L}_{s,2},\mathcal{L}_{t,2})$ and
$T_{EM}=\bigl(\ \mathcal{L}_1\cup\mathcal{L}_2,\ \mathcal{K}_{e,1}\cup\mathcal{K}_{e,2},\
(\mathcal{L}_{s,1}\cup\mathcal{L}_{s,2},\ \mathcal{L}_{t,1}\cup\mathcal{L}_{t,2})\ \bigr)$ be the merge of $T_{E1}$ and $T_{E2}$.
Then, $\mathcal{L}_i\subseteq\mathcal{L}_M$, $\mathcal{K}_{e,i}\subseteq\mathcal{K}_{e,M}$,
and $\mathcal{R}_1,\mathcal{R}_2\subseteq\mathcal{R}_M$.
Thus, no label, property, or endpoint is lost.
\end{lemma}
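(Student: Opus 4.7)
The plan is to mirror the one-line argument used for Lemma~\ref{lemma:monotonicity_nodes} and reduce the statement to elementary set-theoretic facts. Since $T_{EM}$ is defined componentwise as a union of the corresponding components of $T_{E1}$ and $T_{E2}$, each inclusion should follow from the basic identity $A\subseteq A\cup B$ applied to the appropriate pair of sets.

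Concretely, I would first dispatch the label and property components: because $\mathcal{L}_M=\mathcal{L}_1\cup\mathcal{L}_2$, the containment $\mathcal{L}_i\subseteq \mathcal{L}_M$ holds trivially for $i\in\{1,2\}$, and the same argument applied to $\mathcal{K}_{e,M}=\mathcal{K}_{e,1}\cup\mathcal{K}_{e,2}$ gives $\mathcal{K}_{e,i}\subseteq \mathcal{K}_{e,M}$. These are direct instances of the proof pattern already used for node types, so no new ideas are needed.

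Next, I would address the endpoint pair, which is the only component requiring a brief clarification. Since $\mathcal{R}_i=(\mathcal{L}_{s,i},\mathcal{L}_{t,i})$ is an ordered pair of label sets rather than a set, the notation $\mathcal{R}_i\subseteq \mathcal{R}_M$ must be interpreted componentwise, i.e.\ as the conjunction $\mathcal{L}_{s,i}\subseteq \mathcal{L}_{s,1}\cup \mathcal{L}_{s,2}$ and $\mathcal{L}_{t,i}\subseteq \mathcal{L}_{t,1}\cup \mathcal{L}_{t,2}$. Once this convention is stated, both inclusions follow again from $A\subseteq A\cup B$, now applied separately to the source and target label sets.

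The main obstacle, if any, is purely notational rather than mathematical: making the componentwise interpretation of $\subseteq$ on ordered pairs explicit so that the conclusion $\mathcal{R}_1,\mathcal{R}_2\subseteq \mathcal{R}_M$ is unambiguous. After that, the informal conclusion---that no label, property key, source endpoint, or target endpoint present in either $T_{E1}$ or $T_{E2}$ is discarded by the merge---is simply a restatement of the three inclusions just established, and the proof terminates in essentially the same one-liner as Lemma~\ref{lemma:monotonicity_nodes}.
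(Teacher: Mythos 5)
Your proposal is correct and follows essentially the same route as the paper's own proof, which likewise reduces each inclusion to the componentwise union in the definition of $T_{EM}$. Your explicit remark that $\mathcal{R}_i\subseteq\mathcal{R}_M$ must be read componentwise on the ordered pair is a small but welcome clarification that the paper leaves implicit.
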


\begin{proof}
By construction:
$\mathcal{L}_M=\mathcal{L}_1\cup\mathcal{L}_2$,
$\mathcal{K}_{e,M}=\mathcal{K}_{e,1}\cup\mathcal{K}_{e,2}$,
and $\mathcal{R}_M=\mathcal{R}_1 \cup\mathcal{R}_2=(\mathcal{L}_{s,1}\cup\mathcal{L}_{s,2},\ \mathcal{L}_{t,1}\cup\mathcal{L}_{t,2})$.
\end{proof}

\noindent This ensures that in the merging step properties and labels are not eliminated, only added. 

\mpara{Unmatched unlabeled clusters.}  
If an unlabeled cluster is not merged with labeled ones, we then merge it with other unlabeled clusters using the same Jaccard approach. Finally, if it remains unlabeled, we keep it as an \texttt{ABSTRACT} type (as in the PG-Schema~\cite{DBLP:journals/pacmmod/AnglesBD0GHLLMM23}). 

\mpara{Edges.}  
We merge edges only by label and get the set of source and target node types to define the connectivity $\rho_s(e_t)=(t_{\text{src}},t_{\text{tgt}})$. 

\examplebox{
\begin{exmp}
The nodes ``Bob'', ``John'', and ``Alice'' (\myref{Figure}{fig:pg-acm}) all have identical property sets (\texttt{name}, \texttt{gender}, \texttt{bday}). The cluster containing ``Alice'' lacks a type label, but it has high Jaccard similarity with the cluster labeled as \texttt{Person}, which includes ``Bob'' and ``John''. Based on this similarity, we merge Alice's cluster with the \texttt{Person} cluster. In another case, the two \texttt{Post} nodes have different structures. However, since they both have the same label, ``\texttt{Post}'', we merge them.
\end{exmp}
}

\vspace{-4mm}
\begin{algorithm}[h]
\small
\SetAlgoLined

\SetKwIF{If}{ElseIf}{Else}
  {\textbf{if}}{\textbf{ then}}
  {\textbf{ else if}}{\textbf{ else}}
  {} 

\SetKwFor{ForEach}{\textbf{for each}}{\textbf{ do}}{}
\SetKwFor{For}{\textbf{for}}{\textbf{ do}}{}
\SetKwFor{While}{\textbf{while}}{\textbf{ do}}{}

\caption{Extracting and Merging Types}
\label{alg:mergetypes}

\KwIn{
Clusters $\mathcal{C} = \{C_1, C_2, \dots, C_m\}$ with labels $\lambda(C)$ and properties $\mathcal{P}_C$, 
Jaccard similarity threshold $\theta \in [0,1]$
}
\KwOut{Refined set of types $T$}

\SetKwFunction{Merge}{mergeClusters}
\SetKwFunction{Jac}{Jaccard}

$T \gets \emptyset$ \;

\ForEach{$C_i \in \mathcal{C}$}{
  \If{$\lambda(C_i) \neq \emptyset$}{
    \If{$\exists\, T_j \in T : \lambda(T_j) = \lambda(C_i)$}{
      \Merge{$C_i, T_j$} \;
    }\Else{
      add $C_i$ to $T$\;
    }
  }
}

\ForEach{$C_u \in \mathcal{C}$ with $\lambda(C_u) = \emptyset$}{
  $Cands \gets \{ C_l \in T \mid \lambda(C_l) \neq \emptyset \wedge \Jac(\mathcal{K}_{C_u}, \mathcal{K}_{C_l}) \ge \theta \}$ \;
  \If{$Cands \neq \emptyset$}{
    \Merge{$C_u, Cands$} \;
  }
}

\ForEach{pair $(C_u, C_v)$ of unlabeled clusters}{
  \If{$\Jac(\mathcal{K}_{C_u}, \mathcal{K}_{C_v}) \ge \theta$}{
    \Merge{$C_u, C_v$} \;
  }
}

\Return $T$
\end{algorithm}

\vspace{-6mm}

\subsection{Post Processing}
\label{sec:post_processing}

As mentioned before, we compute, optionally, more characteristics of the schema, such as whether a property is optional or mandatory, their data types and edge cardinalities. This helps in providing a precise schema, which supports validation processes, and can be described through 
the PG-schema~\cite{DBLP:journals/pacmmod/AnglesBD0GHLLMM23} grammar.

\mpara{Property constraints.}
Next, we identify mandatory and optional properties for each type from the merging step.
A property is characterized as mandatory for a given type $T$ if it appears in every instance of that type, otherwise, it is considered optional. 
Let $I_T = \{i_1, i_2, \dots, i_n\}$ be the instances of type $T$ (either nodes or edges), and let $p$ be a property observed in at least one instance. Let
$f_T(p) = \frac{|\{i \in I_T \mid p \in \mathcal{P}_i\}|}{|I_T|}
$ denote the frequency of property $p$ in type $T$.
A property $p$ is considered \emph{mandatory} for a type $T$, if $f_T(p) = 1$, else, $p$ is \emph{optional}, assigning the property constraint to $\pi_s(t, p)$.

\examplebox{
\begin{exmp}
Properties \texttt{name}, \texttt{gender}, 
and \texttt{bday} are mandatory for the type \texttt{Person} (\myref{Figure}{fig:pg-acm}), since all \texttt{Person} instances have them.
In contrast, some \texttt{Post} instances have the \texttt{imgFile} property, while others don't, so we consider it as an optional property for the type \texttt{Post}.
\end{exmp}
}

\mpara{Property data types.}
To enrich the schema with property types, we analyze 
the data of the property $p$ with values $v_1,v_2, \dots, v_k$ from each type $T$. A priority-based inference is applied by checking the data type of each value $v$. First, for numeric types, $\text{if } v \in \mathbb{Z}$, then we consider $p$ to be an integer, $\text{if } v \in \mathbb{R} \setminus \mathbb{Z}$, then $p$ is a float,  $\text{if } v \in \{\texttt{true}, \texttt{false}\}$ then $p$ is Boolean, and finally, using regex for date/time ISO formats, defaulting to a string, assigning the property constraint to $\pi_s(t, p)$.
As these heuristics are expensive performance-wise, optionally we add a flag to infer this information, by sampling a small amount of data (e.g., 10\% of the properties, and at least 1000), which notably has minor differences \myref{\S}{sec:evaluation}, as we fallback to a string default.  
We leave for future work the identification of more detailed datatypes, such as enumerated types or bounded ranges.

\examplebox{
\begin{exmp}
For the \texttt{Person} type in \myref{Figure}{fig:pg-acm}, the properties
\texttt{name} and \texttt{gender} are assigned as strings, and \texttt{bday} is inferred as a date based on its format (e.g., \texttt{19/12/1999}).
\end{exmp}
}

\mpara{Cardinalities.}
To find the cardinalities $\mathcal{C}$ of edge types,
we query how many distinct targets each source node connects to, for each edge
type, and vice versa.
Specifically, let \(\mathcal{E}\) be the set of edges in the graph, and let \(\mathrm{src}(e)\) and \(\mathrm{tgt}(e)\) in $V$ denote the source and target node of an edge $e \in \mathcal{E}$, respectively.
For each edge type \(\rho\), we compute the maximum out-degree \(\max_{\mathrm{out}}(\rho) = \max_s |\{ t \mid (s \rightarrow t) \in \mathcal{E},\ \mathrm{type}(s \rightarrow t) = \rho \}|\) and the maximum in-degree \(\max_{\mathrm{in}}(\rho) = \max_t |\{ s \mid (s \rightarrow t) \in \mathcal{E},\ \mathrm{type}(s \rightarrow t) = \rho \}|\). 
The pair \((\max_{\mathrm{out}}(\rho), \max_{\mathrm{in}}(\rho))\) is then interpreted as follows: \((1,1)\) implies a $0$:$1$ relationship, \((>1,1)\) a $N$:$1$, \((1,>1)\) a $0$:$N$, and \((>1,>1)\) an $M$:$N$ cardinality. We cannot determine whether the source's lower bound is exactly 0 or 1, as we query only the edges. This requires to examine if all nodes are connected to the respective edge, which increases computational time; we leave this as future work. 
 This information shows the constraints in the graph, which can be leveraged in tasks such as data validation, consistency enforcement, or query optimization.

\examplebox{
\begin{exmp}
The \texttt{WORKS\_AT} edge (\myref{Figure}{fig:pg-acm}) connects the \texttt{Person} nodes to exactly one \texttt{Org.}, while an organization may have multiple employees. Thus, the inferred cardinality is \(N{:}1\).
Similarly, the \texttt{KNOWS} relationship connects people to other people, so the inferred cardinality is \(M{:}N\).
\end{exmp}

}

\subsection{Schema Serialization} 
From 
the previous steps, we assemble a  property graph schema $S_G = (V_s, E_s, \rho_s
)$ following \myref{Def.}{def:schema_graph}. This schema expresses the complete structure of the data. For interoperability, we export the schema in XSD \cite{DBLP:journals/corr/abs-1202-4532} and PG-Schema \cite{DBLP:journals/pacmmod/AnglesBD0GHLLMM23}, 
enabling easy integration into external tools.
Even though PG-schema has not yet been expressed in a standard structured language~\cite{chantharojwong2025lics,ahmetaj2025common}, we generate both \texttt{LOOSE} and \texttt{STRICT} graph 
schema declarations for demonstration\footnote{
\href{https://github.com/sophisid/PG-HIVE/blob/master/schemadiscovery/pg_schema_output_loose.txt}{\textcolor{linkblue}{LOOSE}}, \href{https://github.com/sophisid/PG-HIVE/blob/master/schemadiscovery/pg_schema_output_strict.txt}{\textcolor{linkblue}{STRICT SCHEMA}}
 }.
 The \texttt{LOOSE} schema type can be used for flexible data insertions, allowing nodes and edges to deviate, and the \texttt{STRICT}, which also describes data types and schema constraints. \texttt{STRICT} schema demands a rigorous structure, which can be overwhelming for real datasets, which include noisy, incomplete, and inconsistent data.

\subsection{Incremental Step} 
\label{sec:incremental}
To make it possible to process large datasets on machines with limited memory, we introduce an \emph{incremental} approach. Instead of recomputing the schema from scratch, PG-HIVE can process new data insertions in small batches. Every new batch stream $Gs$ of nodes and edges is first transformed into vectors and clustered, just like in the initial pipeline. The resulting clusters are then merged with the existing schema (\myref{Algorithm}{alg:mergetypes}), ensuring that similar types are combined while avoiding redundant definitions. This merging step extends the schema with new patterns when necessary, while maintaining consistency with previously discovered types:


\mpara{Schema merging.}\label{def:schema-merging}  
Let $S_1 = (V_{s1}, E_{s1}, \rho_{s1} 
)$ and  
$S_2 = (V_{s2}, E_{s2}, \rho_{s2}
)$  
be two property graph schemas.  The schema merging produces a new schema $S_{\text{merged}} = (V_m, E_m, \rho_m),$ 
such that:  
 $\forall G_1 \in G(S_1): G_1 \in G(S_{\text{merged}}), \quad  
  \forall G_2 \in G(S_2): G_2 \in G(S_{\text{merged}})$, 
  where $G(S)$ denotes the set of graphs conforming to schema $S$.  
  $S_{\text{merged}}$ is the least general schema that can be valid, so it maintains meaningful type definitions and structural coherence, while general enough to accommodate both $S_1$ and $S_2$.

\mpara{Merge rules.} Similarly with the \myref{Algorithm}{alg:mergetypes} and \myref{\S}{sec:merging}, we describe the merge rules as follows:
\begin{itemize}[leftmargin=*,topsep=1pt]
\item\textbf{Node types.} $V_m$ is constructed by unifying $V_{s1}$ and $V_{s2}$.  
Labeled types with the same label(s) are merged. Unlabeled types 
are merged firstly with labeled, then with unlabeled types if they have similar structure,   
otherwise, they are included as new \texttt{ABSTRACT} types.  

\item\textbf{Edge types.} $E_m$ is created by merging edge types with the same labels and updating their connectivity function $\rho_m(e)$.  

\item\textbf{Properties.} $\pi_m$ is defined as the union of properties from $\pi_{s1}, \pi_{s2}$.  
\end{itemize}

Thus, $S_{\text{merged}}$ generalizes both $S_1$ and $S_2$, avoids excessive generalization, and reflects the least general schema that ensures compatibility.

\mpara{Monotonicity.}
The correctness of the merging step in the incremental approach, is based on the \myref{Lemma}{lemma:monotonicity_nodes} and \myref{Lemma}{lemma:monotonicity_edges} which guarantees that when two types are merged, 
their labels, properties and relations are preserved due to their union. 
Therefore, no information is ever discarded on the schema, while every label and property present in $S_1$ or $S_2$ is also present in $S_{\text{merged}}$. 

Formally, let $S_i$ be the schema after processing batch $i$, and $S_{i+1}$ the schema after batch $i{+}1$. 
Then $S_i \sqsubseteq S_{i+1}$, meaning that $S_{i+1}$ is a generalization of $S_i$, extending it with new labels and properties, but without removing existing ones. 
Thus, the sequence $(S_1, S_2, \dots, S_n)$ forms a monotone chain of schemas, 
where each preserves and extends the previous one. 

By updating the schema incrementally, PG-HIVE adapts as the graph grows. Newly observed structures are immediately incorporated and constraints can be refined on demand. This avoids expensive full recomputation while keeping the schema accurate and responsive to evolving data. In practice, this makes the method suitable for dynamic environments where updates are frequent. Handling updates and deletions is left for future work, but the current incremental strategy already reduces computational cost significantly while ensuring that the schema remains up to date.

\subsection{Theoretical Guarantees}
\label{sec:guarantees}
Following, we discuss the guarantees of the inferred schema.

\mpara{Efficiency.} PG-HIVE remains scalable as it avoids pairwise comparisons by using the LSH. The comparisons would need a computational time of $\mathcal{O}(N^2 D)$. LSH reduces this to 
$\mathcal{O}(N T D)$ (ELSH) or $\mathcal{O}(N T)$ (MinHash), where $N$ is the number of elements, 
$T \ll N$ is the number of tables, and $D$ the dimension of the embedding. 

\mpara{Type completeness.}  
Let $G=(V,E)$ be the input property graph and $S_G=(V_s,E_s,\rho_s)$ the schema inferred by PG-HIVE.  
For every node $v \in V$ with label $\ell$ and set of properties $\mathcal{P}_v$, PG-HIVE guarantees that there exists a type $t \in V_s$ such that 
$\ell \in \lambda_s(t) \quad \text{and} \quad \mathcal{P}_v \subseteq \pi_s(t)$.  
This means that no label or property of the graph is lost. This is ensured since we take the union of each cluster, rather than the intersection, so information is preserved.

\mpara{Property constraints.} 
PG-HIVE distinguishes mandatory and optional properties in a sound way.
For a type $T$ with instances $I_T$, a property $p$ is mandatory only if it appears in \emph{all} instances of $T$ (\myref{\S}{sec:post_processing}),
otherwise, it is optional. Hence, every property marked as mandatory is indeed present in every instance of the type.

\mpara{Data type inference.}  
For each property, we examine its observed values and assign the most specific compatible type following a simple hierarchy:  
\texttt{integer}, \texttt{float}, \texttt{date/time}  and  \texttt{string}.  
As a result, all values of a property are consistent with the inferred type, even though the type may be a generalization as string.

\mpara{Cardinalities.}
For every edge type, PG-HIVE computes the maximum in- and out-degree observed in the data $
(\max_{\mathrm{out}}(\rho), \max_{\mathrm{in}}(\rho))$.
These values serve as upper bounds to the respected cardinalities. For example, if a relationship is inferred as $(0,N)$, then, for each source node attached to this edge 
has more than one target.

\mpara{Schema merging.}  
When merging two schemas $S_1$ and $S_2$, the resulting schema $S_{\text{merged}}$ is guaranteed to be general enough to cover both schemas (\myref{\S}{sec:incremental}). Labeled clusters with the same label are merged, unlabeled clusters are merged only when structurally similar, and otherwise kept as abstract types. This ensures that the merged schema is a general schema covering both inputs.

\mpara{Incrementality.}  
When data arrives in batches, PG-HIVE extends the schema without invalidating what has already been discovered. Formally, if $S_i$ is the schema after batch $i$, and $S_{i+1}$ after the next batch, then $S_i \sqsubseteq S_{i+1}$ \myref{\S}{sec:incremental},
meaning that $S_{i+1}$ generalizes $S_i$. So, no previously valid instance is excluded, and the schema evolves monotonically as the graph grows.

In short, PG-HIVE guarantees that (i) all labels and properties observed in the data are preserved, (ii) mandatory and optional properties are classified correctly, (iii) inferred datatypes are always compatible with observed values; if fully computed, (iv) edge cardinalities reflect sound upper bounds, and (v) schemas evolve monotonically when updated incrementally. Together, these guarantees ensure that PG-HIVE produces consistent and faithful schemas without losing information from the underlying graph.

\mpara{Time complexity.}
\label{sec:complexity}
\textit{Static module.} We examine the three main components: preprocessing, clustering and type extraction. Preprocessing transforms each element
into a vector representation using binary property indicators and fixed-dimensional Word2Vec embeddings with the time complexity $\mathcal{O}(N \cdot (P + D))$, where $N$ is the number of data elements (nodes or edges), $P$ the number of properties and $D$ the embedding size.
In LSH, each vector is projected into $T_n$ number hash tables, and computing each projection requires a small cost of $\mathcal{O}(D)$. Thus the clustering requires $\mathcal{O}(N \cdot T_n \cdot D)$ time. Finally, the worst-case scenario of the merging step where the LSH clusters lack any labels, requires computing the Jaccard similarity between all pairs of clusters, has a time complexity of $\mathcal{O}(C_n^2)$, where $C_n$ is the number of clusters produced.
Combining these steps the total complexity of PG-HIVE is: $\mathcal{O}(N \cdot (P + T_n \cdot D)) + \mathcal{O}(C_n^2)$. In practice, $P$, $T_n$ and $D$ are constants, and typically much smaller than $N$, so the complexity is dominated by $\mathcal{O}(N)$ and the cost of the merging step $\mathcal{O}(C_n^2)$, resulting in a 
time complexity of $\mathcal{O}(N + C_n^2)$.

\textit{Incremental module.} 
In the incremental approach, each new batch of $B$ data elements (nodes or edges) passes through the same three steps: preprocessing, clustering, and merging. 
Similarly, the preprocessing requires $\mathcal{O}(B \cdot (P + D))$, and the clustering  $\mathcal{O}(B \cdot T_n \cdot D)$. 
In the merging step, let $C_b$ denote the number of clusters in the current batch. 
These clusters are compared against the existing $C_n$ clusters of the schema, which results in a cost of $\mathcal{O}(C_b \cdot C_n)$. 
So the total complexity per batch is
$ \mathcal{O}(B \cdot (P + T_n \cdot D)) + \mathcal{O}(C_b \cdot C_n).$ Again, $P$, $T_n$ and $D$ are constants, and much smaller than $N$, so the final complexity is $\mathcal{O}(B + (C_b \cdot C_n))$.

Compared to the static case,
the incremental design reduces the workload in smaller updates. 
Since $B \ll N$ and $C_b \ll C_n$, the incremental approach scales due to the batch size, reducing the overhead while ensuring that the schema remains up to date.

\begin{table}[t]
\setlength{\tabcolsep}{3pt}
\centering
\scriptsize
\caption{Dataset statistics.}
\label{tab:datasets_characteristics}

\vspace{-3mm}
\resizebox{\columnwidth}{!}{%
\begin{tabular}{|l|c|c|c|c|c|c|c|c|c|}
\hline
Dataset & Nodes & Edges & 
\shortstack{Node\\Types} &
\shortstack{Edge\\Types} &
\shortstack{Node\\Labels} &
\shortstack{Edge\\Labels} &
\shortstack{Node\\Pat.} &
\shortstack{Edge\\Pat.} &
\shortstack{Real \ \\Synth.}  \\
\hline
POLE  & 61,521    & 105,840  &  11 &  17 & 11 & 16  & 17 & 16 & \textcolor{orng}{S}\\
MB6    & 486,267    & 961,571    & 4  & 5  & 10 & 3  & 52 & 4 & \textcolor{orng}{S} \\
HET.IO & 47,031     & 2,250,197  & 11 & 24 & 12 & 24  & 14 & 35 & \textcolor{frstgreen}{R} \\
FIB25  & 802,473    & 1,625,428    & 4  & 5  & 10 & 3  & 31 & 4 & \textcolor{orng}{S}\\
ICIJ   & 2,016,523  & 3,339,267  & 5 & 14 & 6 & 14 & 208 & 42 &\textcolor{frstgreen}{R}\\
CORD19 & 5,485,296 & 5,720,776 & 16  & 16  & 16  & 16  & 89 & 6  & \textcolor{frstgreen}{R}\\
LDBC   & 3,181,724  & 12,505,476 & 7  & 17 & 8  & 15 & 9  & 15 & \textcolor{orng}{S} \\
IYP    & 44,539,999 & 251,432,812 & 86 & 25 & 33 & 25 & 1210 & 790 & \textcolor{frstgreen}{R}\\
\hline
\end{tabular}
}
\vspace{-4mm}
\end{table}

\section{Evaluation}
\label{sec:evaluation}

This section presents our evaluation methodology and results.

\mpara{Setup.}
All experiments were conducted on a 4-node Spark Standalone cluster  (Ubuntu 20.04.2 LTS, 38 cores/machine, SSDs, Gigabit Ethernet), using Spark 3.4.1, Scala 2.12.10, and Neo4j 4.4.0. All configurations and code are publicly available for reproducibility (\myref{\S}{sec:artifacts}).

\mpara{Datasets.}
Dataset characteristics are summarized in \myref{Table}{tab:datasets_characteristics}. We used four synthetic and four real datasets for the evaluation (indicated as \textcolor{orng}{S} and \textcolor{frstgreen}{R}, respectively). We also include the individual structural patterns (\myref{Def.}{def:node_pattern},\myref{}{def:edge_pattern}) of the various datasets, to highlight the variations in the structure of instances. 

POLE~\cite{neo4j_pole} is a small benchmark with crime investigation data.
HET.IO~\cite{himmelstein2017hetionet} integrates biomedical entities, such as genes, diseases, and drugs. FIB25~\cite{takemura2015synaptic} and MB6~\cite{takemura2017connectome} model connectome data and correspond to the mushroom body
and, respectively, medulla neural networks in the fruit fly brain. ICIJ~\cite{icij_bahamas} contains information on offshore entities, linked to major leaks such as the Panama Papers. The LDBC~\cite{DBLP:conf/sigmod/ErlingALCGPPB15,DBLP:journals/pvldb/SzarnyasWSSBWZB22} simulates a large-scale social network. CORD19~\cite{covidgraph2021} integrates genotype, disease, and bibliographical data. Finally, IYP~\cite{fontugne2024wisdom} describes networking and internet measurements, from integrated datasets.


For all datasets, we already have the schema ground truth (node and edge types), yet none of them include property data types and constraints. Some of these datasets are already used for evaluating PG schema discovery~\cite{lbath2021schema,DBLP:conf/edbt/BonifatiDM22}, or PG-related tasks~\cite{bonifati2024dtgraph,cazzaro2025zograscope, 10577554,giakatos2025pythia}. However, to the best of our knowledge, this is the first time that all of them are used together in a single benchmark, significantly increasing the total number of datasets compared to previous works.

\mpara{Noise injection}. To evaluate the resilience of the various approaches in harder cases, we introduce noise to the datasets. 
We injected noise by randomly removing 0\%–40\% of node/edge properties, testing three label availability scenarios: 100\% (all labels retained), 50\% (half retained), and 0\% (no labels).
This approach enables us to examine how different pattern inconsistencies and poor label annotation affect the quality of schema discovery.

\mpara{Baselines.}
We compare our approach using ELSH~\cite{DBLP:books/cu/LeskovecRU14} and MinHash~\cite{leskovec2020mining},
against GMMSchema~\cite{DBLP:conf/edbt/BonifatiDM22} (GMM), and SchemI~\cite{lbath2021schema}, the only competitors available in the domain.

\mpara{Evaluation metrics.} We evaluate the \textbf{quality }of the resulting schema and the \textbf{efficiency} of the corresponding algorithms. In all approaches, a discovered type is represented as a cluster accompanied by a set of properties.
Each discovered cluster may contain data of multiple types, and the label(s) associated with a cluster are not known beforehand — they are inferred from the composition of the cluster itself after the clustering process. For evaluation purposes, we assign the label(s) to each cluster, namely the most frequent (majority) actual label(s) among the nodes or edges it contains (e.g., if most elements have label \texttt{Person} or \texttt{\{Person \& Student\}}). 
To evaluate the quality of the generated clusters, we use the \textbf{majority-based F1-Score (F1*-Score)}, where the correctness of a node/edge placement is determined based on whether its actual type matches the majority label(s) of its cluster. This approach has been adopted in previous works to assess clustering performance~\cite{mahmood2023optimizing}. This reflects how well the discovered clusters assign the types to the same groups, as well as how we represent the schema types, especially in noisy and semi-labeled datasets.

Beyond clustering quality, we also focus on evaluating the correctness of inferred data type attributes. We compare our sample-based inference against a full scan of the dataset. Formally, for each property $p$, let $D_p$ denote the set of all values of $p$ (by scanning the full dataset), and $S_p \subseteq D_p$ the sampled values. Let $f(p)$ be the datatype inference function. 
We define \textbf{the sampling error} for $p$ as, $\mathrm{error}(p) = \frac{1}{|S_p|} \sum_{v \in S_p} \mathbf{1}\big(f(v) \neq f(D_p)\big)$, where $f(D_p)$  is the inferred datatype when considering all values of $p$, 
and $\mathbf{1}(\cdot)$ equals 1 if the sample-based inference disagrees with the full-scan inference, 0 otherwise.

\usetikzlibrary{positioning}
\definecolor{elsh}{RGB}{255,165,0}
\definecolor{gmm}{RGB}{160,48,64}
\definecolor{minhash}{RGB}{34,139,34}
\definecolor{schemi}{RGB}{10,50,188} 
\tikzset{
  colELSH/.style={draw=elsh, line width=1pt},
  colGMM/.style={draw=gmm, line width=1pt},
  colMinHash/.style={draw=minhash, line width=1pt},
  colSchemI/.style={draw=schemi, line width=1pt},
}

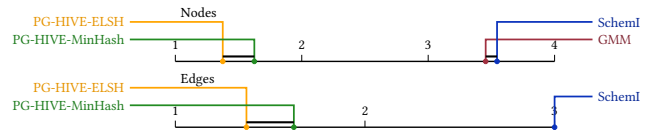
\begin{figure}[t]
\centering

\begin{minipage}{\columnwidth}
\centering
\resizebox{\columnwidth}{!}{%

\begin{tikzpicture}[x=7.6cm,y=2.2cm]
\def\gap{0.60}  \def\L{-0.12} \def\R{1.10}
\def\tickup{0.06} \def\leadup{0.28}

\def\y{0*\gap}
\draw[line width=0.8pt] (0,\y) -- (1,\y);
\foreach \x/\r in {0/1,0.5/2,1/3}{
  \draw (\x,\y) -- (\x,\y+\tickup);
  \node[font=\small, above] at (\x,\y+\tickup) {\r};
}
\node[font=\small, anchor=west] at (0,\y+\leadup+0.15) {Edges};

\def\xEL{0.1875} \def\xMH{0.3125} \def\xSC{1.0000}

\draw[colELSH]   (\L,\y+\leadup+0.08) -- (\xEL,\y+\leadup+0.08) -- (\xEL,\y);
\filldraw[elsh]   (\xEL,\y) circle (1.3pt);
\node[elsh!90!black, font=\small, anchor=east]  at (\L,\y+\leadup+0.08) {PG-HIVE-ELSH};

\draw[colMinHash](\L,\y+\leadup-0.08) -- (\xMH,\y+\leadup-0.08) -- (\xMH,\y);
\filldraw[minhash] (\xMH,\y) circle (1.3pt);
\node[minhash!90!black, font=\small, anchor=east] at (\L,\y+\leadup-0.08) {PG-HIVE-MinHash};

\draw[colSchemI] (\xSC,\y) -- (\xSC,\y+\leadup) -- (\R,\y+\leadup);
\filldraw[schemi] (\xSC,\y) circle (1.3pt);
\node[schemi!80!black, font=\small, anchor=west] at (\R,\y+\leadup) {SchemI};

\draw[line width=1.2pt] (\xEL,\y+0.045) -- (\xMH,\y+0.045);

\def\y{1*\gap}
\draw[line width=0.8pt] (0,\y) -- (1,\y);
\foreach \x/\r in {0/1,0.3333/2,0.6667/3,1/4}{
  \draw (\x,\y) -- (\x,\y+\tickup);
  \node[font=\small, above] at (\x,\y+\tickup) {\r};
}
\node[font=\small, anchor=west] at (0,\y+\leadup+0.17) {Nodes};

\def\xEL{0.1250} \def\xMH{0.2083} \def\xRS{0.8333} \def\dx{0.015}

\draw[colELSH]   (\L,\y+\leadup+0.08) -- (\xEL,\y+\leadup+0.08) -- (\xEL,\y);
\filldraw[elsh]   (\xEL,\y) circle (1.3pt);
\node[elsh!90!black, font=\small, anchor=east]  at (\L,\y+\leadup+0.08) {PG-HIVE-ELSH};

\draw[colMinHash](\L,\y+\leadup-0.08) -- (\xMH,\y+\leadup-0.08) -- (\xMH,\y);
\filldraw[minhash] (\xMH,\y) circle (1.3pt);
\node[minhash!90!black, font=\small, anchor=east] at (\L,\y+\leadup-0.08) {PG-HIVE-MinHash};

\draw[colGMM]    (\xRS-\dx,\y) -- (\xRS-\dx,\y+\leadup-0.08) -- (\R,\y+\leadup-0.08);
\filldraw[gmm]    (\xRS-\dx,\y) circle (1.3pt);
\node[gmm!80!black, font=\small, anchor=west]    at (\R,\y+\leadup-0.08) {GMM};

\draw[colSchemI] (\xRS+\dx,\y) -- (\xRS+\dx,\y+\leadup+0.08) -- (\R,\y+\leadup+0.08);
\filldraw[schemi] (\xRS+\dx,\y) circle (1.3pt);
\node[schemi!80!black, font=\small, anchor=west] at (\R,\y+\leadup+0.08) {SchemI};

\draw[line width=1.2pt] (\xEL,\y+0.045) -- (\xMH,\y+0.045);
\draw[line width=1.2pt] (0.8183,\y+0.045) -- (0.8483,\y+0.045);

\end{tikzpicture}}

\end{minipage}

\caption{Statistical significance analysis of F1-scores across datasets for nodes (top) and edges (bottom) --GMM does not produce edge types. 
}
\label{fig:signif-f1}
\vspace{-4mm}
\end{figure}
\begin{figure*}[!t]
\resizebox{\linewidth}{!}{%
\input{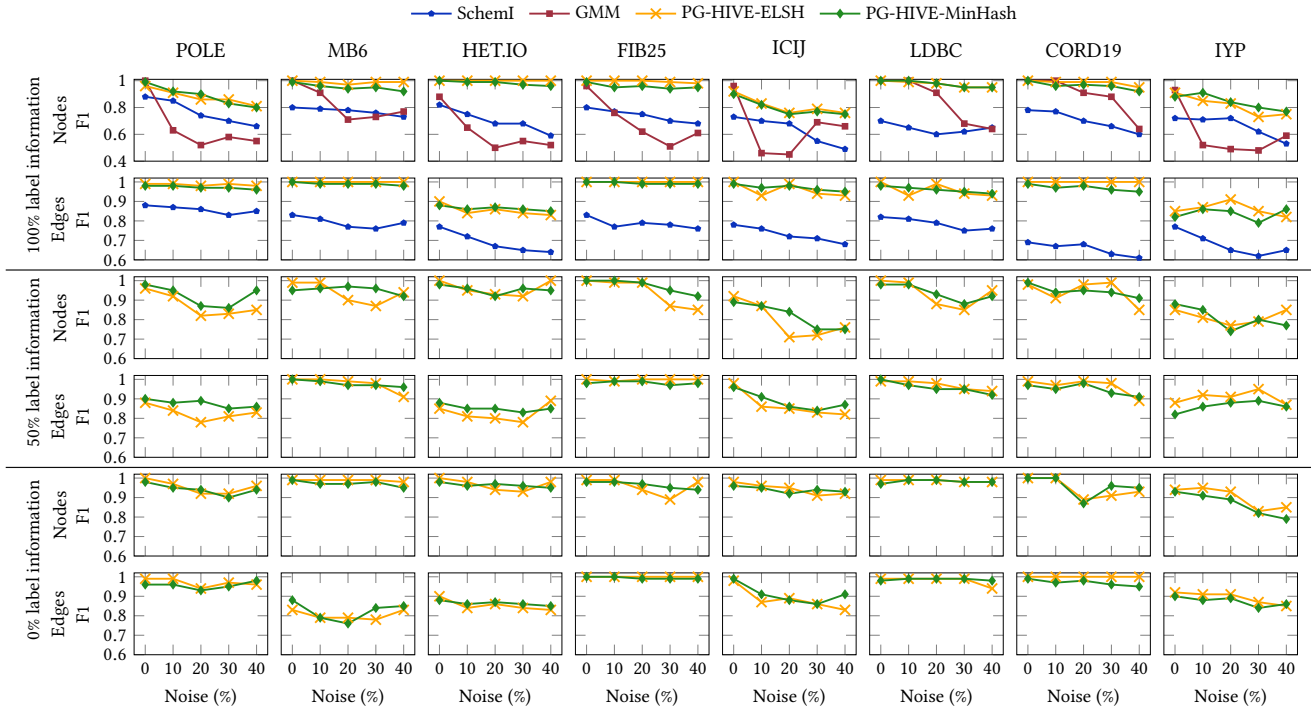}
}
\vspace{-7mm}
\caption{F1 scores across all noise levels (0-40\%) and label availability (0-50-100\%).}
\label{fig:f1score}
\vspace{-3mm}
\end{figure*}

\subsection{Results}

\mpara{Statistical significance.} First, we focus on the effectiveness and present the F1*-scores across all 40 test cases (8 datasets x 5 noise levels) under 100\% label availability. \textit{Note that GMM and SchemI are able to work only under fully labeled datasets (100\% label availability).}
Using the Nemenyi test~\cite{nemenyi1963distribution, herbold2020autorank} we show in \myref{Figure}{fig:signif-f1} 
which pairs of algorithms differ significantly. 
Each point of the \myref{Figure}{fig:signif-f1} indicates the average rank of the method over all cases of noise, with lower ranks indicating better average performance. As shown,  PG-HIVE-ELSH and PG-HIVE-MinHash form a group with no major difference between them, while both significantly outperform GMM and SchemI for nodes, and SchemI for edges --GMM cannot discover edge types. This confirms that PG-HIVE is statistically better across all test cases. 



\mpara{Quality of the discovered schema as noise increases.}
Then we focus in detail on the F1*-scores across different levels of noise. \myref{Figure}{fig:f1score} shows the F1*-scores as noise increases (0-40\%), under three label availability scenarios (100\%, 50\%, 0\%). 

Overall, across all datasets and test cases, PG-HIVE maintains high accuracy with scores above 0.9 even in high levels of noise and no label availability. 
This happens due to PG-HIVE's hybrid approach, leveraging LSH to approximate structural similarities effectively, which remains robust as noise misleads the clustering. Additionally, the merging step further refines clusters, resulting in accurate results.
In the corner case where no noise is ingested (100\% label availability and 0\% noise), PG-HIVE and GMMSchema both accurately discover the correct number of node types (F1*-score approx. 1.0) across datasets, whereas they are both superior to SchemI. However, as noise increases, PG-HIVE's approximation and adaptive merging retain accurate inference and correctly group the corresponding elements by focusing on structural patterns. 

In contrast, GMMSchema and SchemI are sensitive to noise and missing labeling information. For the node type inference, in 100\% label availability, GMMSchema starts with a high F1*-score (1.0), as the Gaussian Mixture Model clustering performs well on clean data, but as noise exceeds 20\%, the variety in property distributions causes misclustering, dropping the F1*-score below 0.6. SchemI consistently has a lower F1*-score (0.6-0.8) than PG-HIVE. For the edge types, only SchemI is able to detect them with a significantly inferior performance than PG-HIVE.




With incomplete labeling information, only PG-HIVE is able to produce results, and as such, we don't see the SchemI and GMMSchema lines for 50\% and 0\% labeling information. Although type inference becomes straightforward when label availability is 100\%, with 50\% label availability, slight misgroupings may occur due to similar structures in different types.
In the 0\% label availability case, inference becomes more challenging, as types with identical structures are merged based on structural similarity using both LSH and the second merging step, potentially reducing precision but still enabling robust discovery with PG-HIVE. 

Comparing the two PG-HIVE variations, there is no significant difference (see \myref{Figure}{fig:signif-f1}), in terms of F1*-scores. They both maintain high accuracy (above 0.9) under different noise levels and label availability, as their LSH-based clustering and second-step merging are efficient enough to capture structural similarities. The slight differences, if any, occur due to the adaptive parameterization \myref{\S}{sec:clustering}, which might overlook some outliers; however this is tackled with our adaptive parameterization.


Comparing node and edge types, inference of node types can be more challenging. Nodes may appear unlabeled in the datasets, along with different structural variations. While edges usually have fewer structural variations, extracting their types relies on their labeling information and the concatenation of the different (if such) source and target labels. 
Additionally, the corresponding internal LSH representation and the merging step give an advantage to edge discovery. Nodes are represented as a Word2Vec along with the one-hot encoded binary vector, while edges have 3 Word2Vec and a binary vector. This increased expressivity of the Word2Vec allows a better separation in edge clustering, when labels are not identical, resulting in higher F1* scores (above 0.9) even under noise. And finally, the second merging step groups together edges with the same labels, if separated before due to structural differences.
\\

\mpara{PG-HIVE across datasets.} Each dataset has diverse properties, as well as multi-labeled nodes, which increase the perplexity of type inference. Simpler or homogeneous datasets, such as POLE, MB6, FIB25, and HET.IO have a flat structure and are easier to infer the schema, even in the absence of label annotation. In contrast, ICIJ and IYP, which are integrated datasets and structurally heterogeneous, pose additional challenges due to structural variability. In these cases, PG-HIVE slightly declines as noise grows, while baselines degrade greatly. LDBC and CORD19, while larger in size, have fewer structural differences, and so remain simple to discover their types. 
Additionally, MB6 and FIB25 obtain nodes annotated with multiple labels,  complicating type inference due to overlapping semantic roles, which can lead to misgroupings if we rely only on the structural similarity, under high levels of noise. Similarly, LDBC, HET.IO, ICIJ and IYP include one extra label that may add semantic information (e.g., HET.IO has assigned to all its nodes an extra \texttt{HetionetNode} label, for integration scenarios). These multi-labeling scenarios require careful consideration of label and structural information in PG-HIVE’s clustering, particularly in heterogeneous datasets where baseline methods struggle to adapt.

\input{tikzz/execution-time}
\input{tikzz/heatmap}

\mpara{Efficiency.} \myref{Figure}{fig:exec_time} shows execution times across all datasets and noise levels. The execution time includes the preprocessing, clustering, and type extraction. 
We can observe that PG-HIVE (both ELSH and MinHash variants) has better execution times, by up to 1.95x on average than SchemI in all cases. PG-HIVE still maintains a comparable efficiency to the GMMSchema method, which only retrieves node types. 
PG-HIVE is able to process small and less complicated datasets (POLE, MB6, HET.IO, FIB25) in less than half a minute,  
bigger and more versatile datasets (LDBC, ICIJ, CORD19), in 3-7 minutes,  while 
more complicated datasets (IYP) need 15 minutes. 

We can notice that, in our approach, noise does not affect the computational time, also justified by \myref{\S}{sec:complexity}. However, this is not the case for GMMSchema. As the noise increases, the number of clusters grows, resulting in higher computational cost, and in most cases, at 40\% noise, PG-HIVE is more efficient, despite the fact that GMMSchema only discovers node types. PG-HIVE's efficiency can, in addition, be theoretically justified by LSH's approximate hashing, which scales linearly with data size (O(N)), unlike GMM's quadratic complexity in noisy scenarios, making PG-HIVE suitable for large and unstructured graphs.

\mpara{Adaptive parameters.}
In \myref{Figure}{fig:heatmap}, we ran experiments in the ELSH approach with different parameters $(T, a)$, comparing against our adaptive approach (marked in red). 
The results of the F1*-score indicate that in most datasets, the adaptive approach is very close to the best-performing setting (yellow indicates higher F1*, thus blue lower). This happens due to the upper bound in the heuristic-based estimation, which narrows down the search space. When performance drops, it is due to the second merging step (\myref{\S}{sec:merging}), not parameter miscalculation. However, as this relies on approximate heuristics, there might be cases that we cannot find the optimal parameter, but still we can identify one with high accuracy. IYP is such a case; however, although we could not identify the optimal parameters, we still retained high accuracy.

In general, lowering $b$ values over-separates patterns; thus, we have a higher F1*-score, while increasing $b$ and $T$ values, merges distinct patterns, lowering F1*-score. This is also in line with our theoretical expectations (\myref{\S}{sec:clustering}).

\begin{figure}

\pgfplotsset{
  compat=1.18,
  twopanel/.style={
    width=0.49\textwidth,
    height=0.21\textwidth,
    grid=both,
    xlabel={Iteration}, ylabel={Exec. Time (s)},
    xtick={1,...,10},
    tick label style={font=\small},
    label style={font=\small},
  }
}

\tikzset{
  leg1/.style={color={rgb,255:red,255;green,165;blue,0},   thin, mark=x,          mark size=2.5pt},
  leg2/.style={color={rgb,255:red,200;green,48;blue,64},   thin, mark=x,    mark size=2.5pt},
  leg3/.style={color={rgb,255:red,128;green,0;blue,128},   thin, mark=triangle*,   mark size=1pt},
  leg4/.style={color={rgb,255:red,34;green,139;blue,34},   thin, mark=otimes*,  mark size=1pt},
  leg5/.style={color={rgb,255:red,51;green,153;blue,255},  thin, mark=triangle*,  mark size=1pt},
  leg6/.style={color={rgb,255:red,20;green,5;blue,200},   thin, mark=x,    mark size=2pt},
  leg7/.style={color={rgb,255:red,210;green,105;blue,30},  thin, mark=star,       mark size=1pt},
  leg8/.style={color={rgb,255:red,0;green,0;blue,0},       thin, mark=star,          mark size=1pt},
}

\centering

\begin{tikzpicture}

\begin{groupplot}[
  group style={
    group size=1 by 3,
    horizontal sep=0.42cm,
    vertical sep= 0.8cm,
    group name=gp
  },
  title style={font=\footnotesize, yshift=-5pt},
  twopanel,
]

\nextgroupplot[title={PG-HIVE-ELSH}, xlabel={},
  ymode=log,
  log basis y=10,
  ymin=0, ymax=100,
  ytick={0.5,1,2,5,10,20,50,100},
  ymajorgrids, yminorgrids,
  log ticks with fixed point]
  \addplot[leg1] coordinates {(1,0.52) (2,0.49) (3,0.47) (4,0.50) (5,0.53) (6,0.48) (7,0.51) (8,0.54) (9,0.50) (10,0.49)}; 
  \addplot[leg2] coordinates {(1,2.08) (2,1.92) (3,2.10) (4,1.85) (5,1.89) (6,1.82) (7,1.85) (8,1.80) (9,1.88) (10,1.85)}; 
  \addplot[leg3] coordinates {(1,0.46) (2,0.50) (3,0.51) (4,0.48) (5,0.52) (6,0.49) (7,0.53) (8,0.47) (9,0.50) (10,0.51)}; 
  \addplot[leg4] coordinates {(1,2.05) (2,1.87) (3,2.10) (4,2.02) (5,1.95) (6,2.18) (7,2.00) (8,2.14) (9,2.07) (10,1.92)}; 
  \addplot[leg5] coordinates {(1,31.8) (2,29.9) (3,30.4) (4,30.4) (5,30.8) (6,28.9) (7,27.3) (8,27.6) (9,32.8) (10,33.4)}; 
  \addplot[leg6] coordinates {(1,9.0) (2,8.2) (3,8.8) (4,9.5) (5,8.5) (6,9.6) (7,8.6) (8,9.0) (9,9.2) (10,8.7)}; 
  \addplot[leg7] coordinates {(1,39) (2,41) (3,43) (4,41) (5,44) (6,42) (7,47) (8,49) (9,49) (10,43)}; 
  \addplot[leg8] coordinates {(1,85) (2,80) (3,75) (4,78) (5,81) (6,76) (7,80) (8,77) (9,79) (10,73)}; 

\nextgroupplot[
  title={PG-HIVE-MinHash},
  xlabel={},
  ymode=log,
  log basis y=10,
  ymin=0, ymax=100,
  ytick={0.5,1,2,5,10,20,50,100},
  ymajorgrids, yminorgrids,
  log ticks with fixed point]
  \addplot[leg1] coordinates {(1,0.52) (2,0.49) (3,0.47) (4,0.50) (5,0.53) (6,0.48) (7,0.51) (8,0.54) (9,0.50) (10,0.49)}; 
  \addplot[leg2] coordinates {(1,2.1) (2,2.05) (3,2.10) (4,2.15) (5,1.89) (6,1.92) (7,1.95) (8,1.90) (9,1.95) (10,1.90)}; 
  \addplot[leg3] coordinates {(1,0.46) (2,0.50) (3,0.51) (4,0.48) (5,0.52)(6,0.49) (7,0.53) (8,0.47) (9,0.50) (10,0.51)}; 
  \addplot[leg4] coordinates {(1,2.05) (2,1.87) (3,2.10) (4,2.02) (5,1.95) (6,2.18) (7,2.00) (8,2.14) (9,2.07) (10,1.92)}; 
  \addplot[leg5] coordinates {(1,33.6) (2,32.8) (3,29.3) (4,29.2) (5,29.9) (6,29.8) (7,30.1) (8,34.5) (9,35.7) (10,30.2)}; 
  \addplot[leg6] coordinates {(1,8.1) (2,8.4) (3,9.1) (4,9.8) (5,9.0) (6,9.9) (7,8.8) (8,9.2) (9,9.5) (10,8.9)}; 
  \addplot[leg7] coordinates {(1,50) (2,49) (3,45) (4,45) (5,45) (6,43) (7,42) (8,42) (9,40) (10,44)}; 
  \addplot[leg8] coordinates {(1,89) (2,82) (3,77) (4,81) (5,84) (6,78) (7,83) (8,79) (9,82) (10,75)}; 


\end{groupplot}

\node[anchor=north, xshift=-8mm, yshift=-30mm]
  at ($(gp c1r1.south)!0.5!(gp c2r1.south)$) {%
  \begin{tikzpicture}[baseline, font=\scriptsize]
    \matrix[column sep=2mm]{
      \draw[leg1] plot coordinates {(0,0) (0.4,0) (0.8,0)} node[right,xshift=2mm]{POLE}; &
      \draw[leg2] plot coordinates {(0,0) (0.4,0) (0.8,0)} node[right,xshift=2mm]{MB6}; &
      \draw[leg3] plot coordinates {(0,0) (0.4,0) (0.8,0)} node[right,xshift=2mm]{HET.IO}; &
      \draw[leg4] plot coordinates {(0,0) (0.4,0) (0.8,0)} node[right,xshift=2mm]{FIB25}; &\\
      \draw[leg5] plot coordinates {(0,0) (0.4,0) (0.8,0)} node[right,xshift=2mm]{ICIJ}; &
      \draw[leg6] plot coordinates {(0,0) (0.4,0) (0.8,0)} node[right,xshift=2mm]{LDBC}; &
      \draw[leg7] plot coordinates {(0,0) (0.4,0) (0.8,0)} node[right,xshift=2mm]{CORD19}; &
      \draw[leg8] plot coordinates {(0,0) (0.4,0) (0.8,0)} node[right,xshift=2mm]{IYP}; \\
    };
  \end{tikzpicture}%
};

\end{tikzpicture}

\vspace{-4mm}

\caption{Incremental execution time per iteration.}
\label{fig:incr}
\vspace{-2mm}
\end{figure}
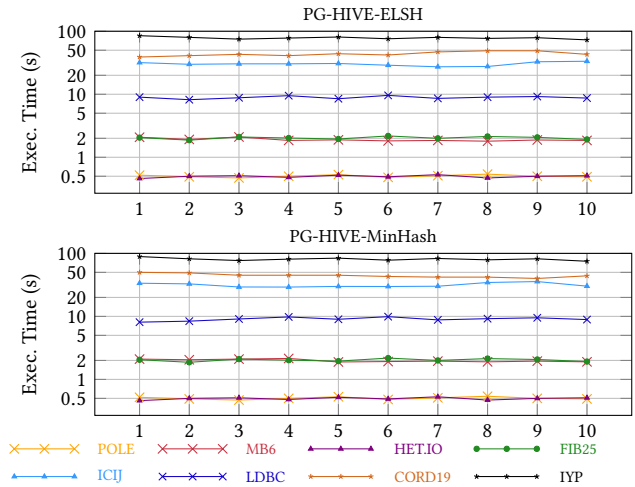
\mpara{Incremental.}
To evaluate the incremental approach, we randomly separate the graph into 10 batches and run both of our approaches. 
In \myref{Figure}{fig:incr},  we show the running time in seconds that each approach needs to process each batch of size $B$.
The incremental approach suits to streaming data, by avoiding full recomputation. The consistent running times justify the incremental design's efficiency, as it processes only new data and merges with the existing schema $(O(B + C_b * C_n))$, with $C_b,C_n \ll B$.

 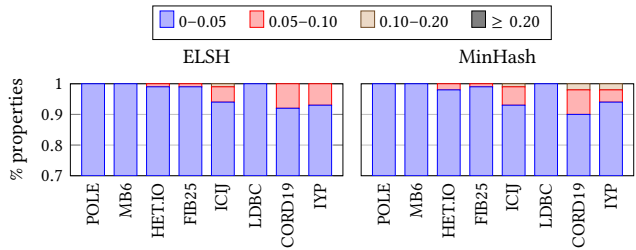
\begin{figure}[t]
\centering
\resizebox{\columnwidth}{!}{%

\begin{tikzpicture}
\begin{groupplot}[
  group style={group size=2 by 1, horizontal sep=8pt},
  width=0.9\linewidth, height=3cm,
  ymin=0.7, ymax=1,
  ybar stacked,
  x=0.5cm,
  symbolic x coords={POLE,MB6,HET.IO,FIB25,ICIJ,LDBC,CORD19,IYP},
  xtick=data,
  xticklabel style={font=\small, rotate=90},
  yticklabel style={font=\small},
  ymajorgrids
]

\nextgroupplot[
  title={ELSH},
  ylabel={\% properties},
  legend to name=DeltaBinsLegend,
  legend columns=5,
  legend style={font=\small, /tikz/every even column/.style={column sep=8pt}}
]
\addplot coordinates {(POLE,1) (MB6,1) (HET.IO,0.99) (FIB25,0.99) (ICIJ,0.94) (LDBC,1) (CORD19,0.92) (IYP,0.93)};
\addplot coordinates {(POLE,0) (MB6,0) (HET.IO,0.01) (FIB25,0.01) (ICIJ,0.05) (LDBC,0) (CORD19,0.08) (IYP,0.07)};
\addplot coordinates {(POLE,0) (MB6,0) (HET.IO,0) (FIB25,0) (ICIJ,0.01) (LDBC,0) (CORD19,0) (IYP,0.0)};
\addplot coordinates {(POLE,0) (MB6,0) (HET.IO,0) (FIB25,0) (ICIJ,0) (LDBC,0) (CORD19,0) (IYP,0.0)};
\legend{{0–0.05},{0.05–0.10},{0.10–0.20},{$\geq 0.20$}} %

\nextgroupplot[
  title={MinHash},
  yticklabels={},
  ymajorticks=false,
  ylabel={}
]
\addplot coordinates {(POLE,1) (MB6,1) (HET.IO,0.98) (FIB25,0.99) (ICIJ,0.93) (LDBC,1) (CORD19,0.90) (IYP,0.94)};
\addplot coordinates {(POLE,0) (MB6,0) (HET.IO,0.02) (FIB25,0.01) (ICIJ,0.06) (LDBC,0) (CORD19,0.08) (IYP,0.04)};
\addplot coordinates {(POLE,0) (MB6,0) (HET.IO,0) (FIB25,0) (ICIJ,0.01) (LDBC,0) (CORD19,0.02) (IYP,0.02)};
\addplot coordinates {(POLE,0) (MB6,0.00) (HET.IO,0) (FIB25,0) (ICIJ,0) (LDBC,0) (CORD19,0) (IYP,0.0)};

\end{groupplot}

\node at ($(group c1r1.north east)!0.5!(group c2r1.north west)$) [above=16pt]
  {\pgfplotslegendfromname{DeltaBinsLegend}}; 
\end{tikzpicture}
}
\vspace{-6mm}
\caption{Distribution of data type inference errors using sampling, across datasets for ELSH (left) and MinHash (right).}
\label{fig:delta}
\vspace{-4mm}
\end{figure}
\mpara{Schema constraints.}
The inference of optional/mandatory attributes and cardinalities has previously been explained in ~\myref{\S}{sec:guarantees}. In this direction, we focus next on evaluating the quality of the identified datatypes. In our approach, we infer the datatypes based on sampling (\myref{\S}{sec:post_processing}). As such, we compare our approach with the dominant types determined using a full scan, in terms of sampling error.  The results are shown in \myref{Figure}{fig:delta}, grouped into bins (0–0.05, 0.05–0.10, 0.10–0.20, and $\geq$0.20), and normalized by the number of properties per dataset. 

As shown, most properties fall into the lowest error range, demonstrating consistent datatype inference across datasets for both clustering approaches.
The few outliers occur on bigger datasets with heterogeneous properties like ICIJ, CORD19, and IYP, where the small sample does not reflect all the properties available in the data. Misinterpretations of data are usually of the following types: assigning \texttt{DOUBLE} rather than \texttt{INTEGER}, or assigning \texttt{DATE}, but the full scan might observe outliers and assign the property as \texttt{STRING}. Nevertheless, this problem can be easily mitigated by increasing the sampling percentage and is dataset dependent. We leave this exploration for future work.

\mpara{Summary.} Our extensive evaluation on both real and synthetic datasets demonstrates that PG-HIVE consistently outperforms existing schema discovery approaches,  even under high levels of noise and in the absence of label information.
In terms of quality, PG-HIVE maintains high accuracy, even under challenging scenarios with 0\% label information, where GMMSchema and SchemI do not work at all, and 40\% noise, where GMMSchema and SchemI underperform. Specifically, PG-HIVE achieves up to 65\% higher F1*-score for node types and up to 40\% for edge type discovery compared to existing approaches. The statistical significance analysis (\myref{Figure}{fig:signif-f1}) confirms that both PG-HIVE approaches (ELSH and MinHash) are superior to prior methods across all datasets. 

In terms of efficiency, PG-HIVE has slightly slower execution times than GMMSchema, which can be justified as GMMSchema only infers node types, while PG-HIVE discovers both node and edge types, as well as constraings. Nevertheless, PG-HIVE remains efficient, achieving on average 1.95x faster execution compared to SchemI.
Moreover, our incremental approach demonstrates an efficient running time on each dataset.
Our adaptive parameter selection strategy leverages heuristics (\myref{\S}{sec:clustering}), which limit the exploration space while achieving near-optimal parameter selection.
Finally, datatype inference shows a small sampling error, demonstrating that the sampling-based method can be both reliable and efficient even for large datasets.  

Overall, PG-HIVE provides a balance between accuracy and efficiency, outperforming the current state-of-the-art methods. This comprehensive evaluation justifies PG-HIVE's advancements, as it addresses real-world cases like multi-labeled and heterogeneous datasets, confirming its importance for property graph management.

\vspace{-4mm}
\section{Conclusion and Future Work}
\label{sec:conclusion}

In this paper, we present PG-HIVE, a hybrid incremental schema discovery approach for property graphs. Our approach discovers node types, edge types, and infers schema characteristics, without any prior schema information and with high levels of noise. In addition, it can process large datasets in smaller batches incrementally. Unlike previous methods, it does not rely only on predefined labels; instead, it groups nodes and edges based on both structure and semantics. Our experiments show that it outperforms baseline approaches
and infers the schema successfully even with semi-annotated data.

As future work, we intend to investigate more challenging scenarios that further complicate schema discovery. Specifically, we aim to: a) handle cases where no label information is available and data is extremely sparse, b) detect types that share identical type patterns but lack distinguishing labels, and c) support integration scenarios when label semantics are not consistent (e.g., labels in different languages). 
To address these challenges,  
we plan to enhance PG-HIVE to handle the label variations by integrating large language models (LLMs), to semantically align labels across datasets, without relying on exact string matches. Additionally, for enumerations and value semantics, we should leverage the property values and their semantic interpretation, along with additional schema constraints. 


 \section*{Artifacts}
 \label{sec:artifacts}
All configurations and code are publicly available for reproducibility here: \textcolor{linkblue}{\url{https://github.com/sophisid/PG-HIVE}}. 
The datasets used and their sources are listed as:
POLE~\cite{neo4j_pole} \href{https://github.com/neo4j-graph-examples/pole}{\textcolor{linkblue}{Github repo}}, 
MB6~\cite{takemura2017connectome,pdf-neuroprint} \href{https://github.com/sophisid/PG-HIVE/tree/master/datasets/MB6}{\textcolor{linkblue}{CSV Dataset}}, 
HET.IO~\cite{himmelstein2017hetionet} \href{https://github.com/hetio/hetionet}{\textcolor{linkblue}{Github repo}}, 
FIB25~\cite{takemura2015synaptic,pdf-neuroprint} \href{https://github.com/sophisid/PG-HIVE/tree/master/datasets/FIB25}{\textcolor{linkblue}{CSV Dataset}}, 
ICIJ~\cite{icij_bahamas} \href{https://github.com/ICIJ/offshoreleaks-data-packages}{\textcolor{linkblue}{Github repo}}, 
LDBC~\cite{DBLP:conf/sigmod/ErlingALCGPPB15,DBLP:journals/pvldb/SzarnyasWSSBWZB22} \href{https://github.com/sophisid/PG-HIVE/tree/master/datasets/LDBC}{\textcolor{linkblue}{CSV Dataset}}, 
CORD-19~\cite{covidgraph2021} \href{https://github.com/allenai/cord19}{\textcolor{linkblue}{Github repo}}, 
IYP~\cite{fontugne2024wisdom} \href{https://github.com/InternetHealthReport/internet-yellow-pages}{\textcolor{linkblue}{Github repo}}.

\bibliographystyle{splncs04}
\bibliography{bibliography}

@inproceedings{ DBLP:conf/edbt/BonifatiDM22,
  author       = {Angela Bonifati and
                  Stefania Dumbrava and
                  Nicolas Mir},
  title        = {Hierarchical Clustering for Property Graph Schema Discovery},
  booktitle    = {{EDBT}},
  pages        = {2:449--2:453},
  year         = {2022},
  doi          = {10.48786/EDBT.2022.39}
}

@article{DBLP:journals/pvldb/BonifatiDMGJLP22,
  author       = {Angela Bonifati and
                  Stefania{-}Gabriela Dumbrava and
                  Emile Martinez and
                  Fatemeh Ghasemi and
                  Malo Jaffr{\'{e}} and
                  Pacome Luton and
                  Thomas Pickles},
  title        = {DiscoPG: Property Graph Schema Discovery and Exploration},
  journal      = {Proc. {VLDB} Endow.},
  volume       = {15},
  number       = {12},
  pages        = {3654--3657},
  year         = {2022},
  doi          = {10.14778/3554821.3554867}
}

@book{DBLP:books/cu/LeskovecRU14,
  author       = {Jure Leskovec and
                  Anand Rajaraman and
                  Jeffrey D. Ullman},
  title        = {Mining of Massive Datasets, 2nd Ed},
  publisher    = {Cambridge University Press},
  year         = {2014}
}

@article{DBLP:journals/csur/AnglesABHRV17,
  author       = {Renzo Angles and
                  Marcelo Arenas and
                  Pablo Barcel{\'{o}} and
                  Aidan Hogan and
                  Juan L. Reutter and
                  Domagoj Vrgoc},
  title        = {Foundations of Modern Query Languages for Graph Databases},
  journal      = {{ACM} Comput. Surv.},
  volume       = {50},
  number       = {5},
  pages        = {68:1--68:40},
  year         = {2017},
  doi          = {10.1145/3104031}
}

@misc{apache_spark_lsh,
  author       = {Apache Spark MLlib},
  title        = {Apache Spark MLlib: LSH Implementation},
  year         = {2024},
  howpublished = {\url{https://github.com/apache/spark/blob/master/mllib/src/main/scala/org/apache/spark/ml/feature/LSH.scala}},
  note         = {Accessed: 2024-11-14}
}

@article{sahu2020ubiquity,
  title     = {The ubiquity of large graphs and surprising challenges of graph processing: extended survey},
  author    = {Sahu, Siddhartha and Mhedhbi, Amine and Salihoglu, Semih and Lin, Jimmy and {\"O}zsu, M Tamer},
  journal   = {The VLDB journal},
  volume    = {29},
  pages     = {595--618},
  year      = {2020},
  publisher = {Springer}
}

@article{hartig2014reconciliation,
  author       = {Olaf Hartig},
  title        = {Reconciliation of RDF* and Property Graphs},
  journal      = {CoRR},
  volume       = {abs/1409.3288},
  year         = {2014}
}

@article{angles2017foundations,
  title         ={Foundations of modern query languages for graph databases},
  author        ={Angles Renzo and Arenas Marcelo and Barcel{\'o} Pablo and Hogan Aidan and Reutter Juan and Vrgo{\v{c}} Domagoj},
  journal       ={ACM Computing Surveys (CSUR)},
  volume        ={50},
  number        ={5},
  pages         ={1--40},
  year          ={2017},
  publisher     ={ACM New York, NY, USA}
}

@inproceedings{angles2018property,
  title         ={The Property Graph Database Model.},
  author        ={Angles, Renzo},
  booktitle     ={AMW},
  year          ={2018}
}

@article{bonifati2018querying,
  title         ={Querying Graphs. Morgan \& Claypool Publishers},
  author        ={Bonifati, Angela and Fletcher, George HL and Voigt, Hannes and Yakovets, Nikolay and Jagadish, HV},
  journal       ={Synthesis Lectures on Data Management},
  year          ={2018}
}

@article{DBLP:journals/pacmmod/AnglesBD0GHLLMM23,
  author       = {Renzo Angles and
                  Angela Bonifati and
                  Stefania Dumbrava and
                  George Fletcher and
                  Alastair Green and
                  Jan Hidders and
                  Bei Li and
                  Leonid Libkin and
                  Victor Marsault and
                  Wim Martens and
                  Filip Murlak and
                  Stefan Plantikow and
                  Ognjen Savkovic and
                  Michael Schmidt and
                  Juan Sequeda and
                  Slawek Staworko and
                  Dominik Tomaszuk and
                  Hannes Voigt and
                  Domagoj Vrgoc and
                  Mingxi Wu and
                  Dusan Zivkovic},
  title        = {PG-Schema: Schemas for Property Graphs},
  journal      = {Proc. {ACM} Manag. Data},
  volume       = {1},
  number       = {2},
  pages        = {198:1--198:25},
  year         = {2023},
  doi          = {10.1145/3589778}
}

@article{rodriguez2010constructions,
  author       = {Marko A. Rodriguez and
                  Peter Neubauer},
  title        = {Constructions from Dots and Lines},
  journal      = {CoRR},
  volume       = {abs/1006.2361},
  year         = {2010}
}

@inproceedings{DBLP:conf/grades/HartigH19,
  author       = {Olaf Hartig and
                  Jan Hidders},
  title        = {Defining Schemas for Property Graphs by using the GraphQL Schema Definition
                  Language},
  booktitle    = {{GRADES/NDA}},
  pages        = {6:1--6:11},
  year         = {2019},
  doi          = {10.1145/3327964.3328495}
}

@inproceedings{lbath2021schema,
  author       = {Han{\^{a}} Lbath and
                  Angela Bonifati and
                  Russ Harmer},
  title        = {Schema Inference for Property Graphs},
  booktitle    = {{EDBT}},
  pages        = {499--504},
  year         = {2021},
  doi          = {10.5441/002/EDBT.2021.58}
}

@inproceedings{DBLP:conf/sigmod/ErlingALCGPPB15,
  author       = {Orri Erling and
                  Alex Averbuch and
                  Josep Llu{\'{\i}}s Larriba{-}Pey and
                  Hassan Chafi and
                  Andrey Gubichev and
                  Arnau Prat{-}P{\'{e}}rez and
                  Minh{-}Duc Pham and
                  Peter A. Boncz},
  title        = {The {LDBC} Social Network Benchmark: Interactive Workload},
  booktitle    = {{SIGMOD}},
  pages        = {619--630},
  year         = {2015},
  doi          = {10.1145/2723372.2742786},
}

@article{takemura2015synaptic,
  title         ={Synaptic circuits and their variations within different columns in the visual system of Drosophila},
  author        ={Takemura, Shin-ya and et al.},
  journal       ={Proceedings of the National Academy of Sciences},
  volume        ={112},
  year          ={2015},
  pages         ={13711--13716},
  publisher     ={National Academy of Sciences}
}

@article{takemura2017connectome,
  title         ={A Connectome of a Learning and Memory Center in the Adult Drosophila Brain},
  author        ={Takemura, Shin-ya and et al.},
  journal       ={eLife},
  volume        ={6},
  year          ={2017},
  publisher     ={eLife Sciences Publications Limited}
}

@article{kellou2022survey,
  title         ={A survey on semantic schema discovery},
  author        ={Kellou-Menouer Kenza and Kardoulakis Nikolaos and Troullinou Georgia and Kedad Zoubida and Plexousakis Dimitris and Kondylakis Haridimos},
  journal       ={The VLDB Journal},
  volume        ={31},
  number        ={4},
  pages         ={675--710},
  year          ={2022},
  doi          = {10.1007/S00778-021-00717-X}
}

@misc{Neo4jPlatform,
    title = {The Neo4j Graph Platform},
    url   = {https://neo4j.com/},
    note  = {Accessed: 2025-04-17}
}

@book{leskovec2020mining,
  title     = {Mining of Massive Datasets},
  author    = {Jure Leskovec and Anand Rajaraman and Jeffrey D. Ullman},
  year      = {2020},
  publisher = {Cambridge University Press},
  edition   = {3rd},
  chapter   = {3},
  note      = {Chapter 3: Finding Similar Items},
  url       = {http://www.mmds.org/},
}

@misc{pdf-neuroprint,
    key = {Janelia Research} ,
    title ={NeuPrint Manual},
    note= {An introduction to the NeuPrint data model and Cypher Query Language},
    url ={https://www.janelia.org/sites/default/files/Project%20Teams/Fly%20EM/3.1%20191002_neuPrintManual-draft.pdf}
}

@misc{oraclePropertyGraph,
  author       = {{Oracle}},
  title        = {Property Graph Overview},
  year         = {2025},
  note         = {Accessed: 2025-04-03},
  howpublished = {\url{https://docs.oracle.com/en/database/oracle/property-graph/}}
}

@article{DBLP:journals/corr/abs-1909-04881,
  author       = {Joshua Shinavier and
                  Ryan Wisnesky},
  title        = {Algebraic Property Graphs},
  journal      = {CoRR},
  volume       = {abs/1909.04881},
  year         = {2019},
  url          = {http://arxiv.org/abs/1909.04881},
  eprinttype    = {arXiv},
  eprint       = {1909.04881},
  timestamp    = {Tue, 17 Sep 2019 11:23:44 +0200},
  biburl       = {https://dblp.org/rec/journals/corr/abs-1909-04881.bib},
  bibsource    = {dblp computer science bibliography, https://dblp.org}
}

@article{DBLP:journals/corr/abs-1202-4532,
  author       = {Anirban Sarkar},
  title        = {Conceptual Level Design of Semi-structured Database System: Graph-semantic
                  Based Approach},
  journal      = {CoRR},
  volume       = {abs/1202.4532},
  year         = {2012},
  url          = {http://arxiv.org/abs/1202.4532},
  eprinttype    = {arXiv},
  eprint       = {1202.4532},
  timestamp    = {Mon, 13 Aug 2018 16:46:45 +0200},
  biburl       = {https://dblp.org/rec/journals/corr/abs-1202-4532.bib},
  bibsource    = {dblp computer science bibliography, https://dblp.org}
}

@article{DBLP:journals/ase/SharmaS22,
  author       = {Chandan Sharma and
                  Roopak Sinha},
  title        = {FLASc: a formal algebra for labeled property graph schema},
  journal      = {Autom. Softw. Eng.},
  volume       = {29},
  number       = {1},
  pages        = {37},
  year         = {2022},
  url          = {https://doi.org/10.1007/s10515-022-00336-y},
  doi          = {10.1007/S10515-022-00336-Y},
  timestamp    = {Tue, 16 Aug 2022 23:08:27 +0200},
  biburl       = {https://dblp.org/rec/journals/ase/SharmaS22.bib},
  bibsource    = {dblp computer science bibliography, https://dblp.org}
}

@article{sadruddin2025llms4schemadiscovery,
  title={LLMs4SchemaDiscovery: A Human-in-the-Loop Workflow for Scientific Schema Mining with Large Language Models},
  author={Sadruddin, Sameer and D'Souza, Jennifer and Poupaki, Eleni and Watkins, Alex and Giglou, Hamed Babaei and Rula, Anisa and Karasulu, Bora and Auer, S{\"o}ren and Mackus, Adrie and Kessels, Erwin},
  journal={arXiv preprint arXiv:2504.00752},
  year={2025}
}

@inproceedings{DBLP:conf/esws/ChevallierKFC24,
  author       = {Zo{\'{e}} Chevallier and
                  Zoubida Kedad and
                  B{\'{e}}atrice Finance and
                  Fr{\'{e}}d{\'{e}}ric Chaillan},
  editor       = {Albert Mero{\~{n}}o{-}Pe{\~{n}}uela and
                  {\'{O}}scar Corcho and
                  Paul Groth and
                  Elena Simperl and
                  Valentina Tamma and
                  Andrea Giovanni Nuzzolese and
                  Mar{\'{\i}}a Poveda{-}Villal{\'{o}}n and
                  Marta Sabou and
                  Valentina Presutti and
                  Irene Celino and
                  Artem Revenko and
                  Joe Raad and
                  Bruno Sartini and
                  Pasquale Lisena},
  title        = {Data Search and Discovery in {RDF} Sources},
  booktitle    = {The Semantic Web: {ESWC} 2024 Satellite Events - Hersonissos, Crete,
                  Greece, May 26-30, 2024, Proceedings, Part {I}},
  series       = {Lecture Notes in Computer Science},
  volume       = {15344},
  pages        = {188--192},
  publisher    = {Springer},
  year         = {2024},
  url          = {https://doi.org/10.1007/978-3-031-78952-6\_24},
  doi          = {10.1007/978-3-031-78952-6\_24},
  timestamp    = {Wed, 19 Feb 2025 12:54:38 +0100},
  biburl       = {https://dblp.org/rec/conf/esws/ChevallierKFC24.bib},
  bibsource    = {dblp computer science bibliography, https://dblp.org}
}

@inproceedings{DBLP:conf/bdcsintell/BouhamoumKL18,
  author       = {Redouane Bouhamoum and
                  Zoubida Kedad and
                  St{\'{e}}phane Lopes},
  editor       = {Marie{-}Rita Hojeij and
                  B{\'{e}}atrice Finance and
                  Yehia Taher and
                  Karine Zeitouni and
                  Rafiqul Haque and
                  Mohamed Dbouk},
  title        = {Schema Discovery in Large Web Data Sources},
  booktitle    = {Proceedings of the 1st International Conference on Big Data and Cyber-Security
                  Intelligence, BDCSIntell 2018, Hadath, Lebanon, December 13-15, 2018},
  series       = {{CEUR} Workshop Proceedings},
  volume       = {2343},
  pages        = {67--74},
  publisher    = {CEUR-WS.org},
  year         = {2018},
  url          = {https://ceur-ws.org/Vol-2343/paper14.pdf},
  timestamp    = {Mon, 05 Feb 2024 20:34:09 +0100},
  biburl       = {https://dblp.org/rec/conf/bdcsintell/BouhamoumKL18.bib},
  bibsource    = {dblp computer science bibliography, https://dblp.org}
}

@inproceedings{DBLP:conf/er/Kellou-MenouerK15,
  author       = {Kenza Kellou{-}Menouer and
                  Zoubida Kedad},
  editor       = {Paul Johannesson and
                  Mong{-}Li Lee and
                  Stephen W. Liddle and
                  Andreas L. Opdahl and
                  Oscar Pastor L{\'{o}}pez},
  title        = {Schema Discovery in {RDF} Data Sources},
  booktitle    = {Conceptual Modeling - 34th International Conference, {ER} 2015, Stockholm,
                  Sweden, October 19-22, 2015, Proceedings},
  series       = {Lecture Notes in Computer Science},
  volume       = {9381},
  pages        = {481--495},
  publisher    = {Springer},
  year         = {2015},
  url          = {https://doi.org/10.1007/978-3-319-25264-3\_36},
  doi          = {10.1007/978-3-319-25264-3\_36},
  timestamp    = {Thu, 01 Sep 2022 08:10:23 +0200},
  biburl       = {https://dblp.org/rec/conf/er/Kellou-MenouerK15.bib},
  bibsource    = {dblp computer science bibliography, https://dblp.org}
}

@inproceedings{DBLP:conf/edbt/ChristodoulouPF13,
  author       = {Klitos Christodoulou and
                  Norman W. Paton and
                  Alvaro A. A. Fernandes},
  editor       = {Giovanna Guerrini},
  title        = {Structure inference for linked data sources using clustering},
  booktitle    = {Joint 2013 {EDBT/ICDT} Conferences, {EDBT/ICDT} '13, Genoa, Italy,
                  March 22, 2013, Workshop Proceedings},
  pages        = {60--67},
  publisher    = {{ACM}},
  year         = {2013},
  url          = {https://doi.org/10.1145/2457317.2457328},
  doi          = {10.1145/2457317.2457328},
  timestamp    = {Wed, 14 Nov 2018 10:55:56 +0100},
  biburl       = {https://dblp.org/rec/conf/edbt/ChristodoulouPF13.bib},
  bibsource    = {dblp computer science bibliography, https://dblp.org}
}

@article{DBLP:journals/corr/abs-2411-09999,
  author       = {Sydney Anuyah and
                  Victor Bolade and
                  Oluwatosin Agbaakin},
  title        = {Understanding Graph Databases: {A} Comprehensive Tutorial and Survey},
  journal      = {CoRR},
  volume       = {abs/2411.09999},
  year         = {2024},
  url          = {https://doi.org/10.48550/arXiv.2411.09999},
  doi          = {10.48550/ARXIV.2411.09999},
  eprinttype    = {arXiv},
  eprint       = {2411.09999},
  timestamp    = {Mon, 03 Mar 2025 21:35:45 +0100},
  biburl       = {https://dblp.org/rec/journals/corr/abs-2411-09999.bib},
  bibsource    = {dblp computer science bibliography, https://dblp.org}
}

@inproceedings{DBLP:conf/sigmod/AnglesBDFHHLLLM21,
  author       = {Renzo Angles and
                  Angela Bonifati and
                  Stefania Dumbrava and
                  George Fletcher and
                  Keith W. Hare and
                  Jan Hidders and
                  Victor E. Lee and
                  Bei Li and
                  Leonid Libkin and
                  Wim Martens and
                  Filip Murlak and
                  Josh Perryman and
                  Ognjen Savkovic and
                  Michael Schmidt and
                  Juan F. Sequeda and
                  Slawek Staworko and
                  Dominik Tomaszuk},
  editor       = {Guoliang Li and
                  Zhanhuai Li and
                  Stratos Idreos and
                  Divesh Srivastava},
  title        = {PG-Keys: Keys for Property Graphs},
  booktitle    = {{SIGMOD} '21: International Conference on Management of Data, Virtual
                  Event, China, June 20-25, 2021},
  pages        = {2423--2436},
  publisher    = {{ACM}},
  year         = {2021},
  url          = {https://doi.org/10.1145/3448016.3457561},
  doi          = {10.1145/3448016.3457561},
  timestamp    = {Wed, 07 Dec 2022 23:08:47 +0100},
  biburl       = {https://dblp.org/rec/conf/sigmod/AnglesBDFHHLLLM21.bib},
  bibsource    = {dblp computer science bibliography, https://dblp.org}
}

@inproceedings{DBLP:conf/vldb/GionisIM99,
  author       = {Aristides Gionis and
                  Piotr Indyk and
                  Rajeev Motwani},
  editor       = {Malcolm P. Atkinson and
                  Maria E. Orlowska and
                  Patrick Valduriez and
                  Stanley B. Zdonik and
                  Michael L. Brodie},
  title        = {Similarity Search in High Dimensions via Hashing},
  booktitle    = {VLDB'99, Proceedings of 25th International Conference on Very Large
                  Data Bases, September 7-10, 1999, Edinburgh, Scotland, {UK}},
  pages        = {518--529},
  publisher    = {Morgan Kaufmann},
  year         = {1999},
  url          = {http://www.vldb.org/conf/1999/P49.pdf},
  timestamp    = {Wed, 11 May 2022 08:53:25 +0200},
  biburl       = {https://dblp.org/rec/conf/vldb/GionisIM99.bib},
  bibsource    = {dblp computer science bibliography, https://dblp.org}
}

@inproceedings{DBLP:conf/cikm/WangCSR13,
  author       = {Hongya Wang and
                  Jiao Cao and
                  LihChyun Shu and
                  Davood Rafiei},
  editor       = {Qi He and
                  Arun Iyengar and
                  Wolfgang Nejdl and
                  Jian Pei and
                  Rajeev Rastogi},
  title        = {Locality sensitive hashing revisited: filling the gap between theory
                  and algorithm analysis},
  booktitle    = {22nd {ACM} International Conference on Information and Knowledge Management,
                  CIKM'13, San Francisco, CA, USA, October 27 - November 1, 2013},
  pages        = {1969--1978},
  publisher    = {{ACM}},
  year         = {2013},
  url          = {https://doi.org/10.1145/2505515.2505765},
  doi          = {10.1145/2505515.2505765},
  timestamp    = {Sat, 01 Mar 2025 10:58:19 +0100},
  biburl       = {https://dblp.org/rec/conf/cikm/WangCSR13.bib},
  bibsource    = {dblp computer science bibliography, https://dblp.org}
}

@article{DBLP:journals/pvldb/SzarnyasWSSBWZB22,
  author       = {G{\'{a}}bor Sz{\'{a}}rnyas and
                  Jack Waudby and
                  Benjamin A. Steer and
                  D{\'{a}}vid Szak{\'{a}}llas and
                  Altan Birler and
                  Mingxi Wu and
                  Yuchen Zhang and
                  Peter A. Boncz},
  title        = {The {LDBC} Social Network Benchmark: Business Intelligence Workload},
  journal      = {Proc. {VLDB} Endow.},
  volume       = {16},
  number       = {4},
  pages        = {877--890},
  year         = {2022},
  url          = {https://www.vldb.org/pvldb/vol16/p877-szarnyas.pdf},
  doi          = {10.14778/3574245.3574270},
  timestamp    = {Sun, 19 Jan 2025 13:44:41 +0100},
  biburl       = {https://dblp.org/rec/journals/pvldb/SzarnyasWSSBWZB22.bib},
  bibsource    = {dblp computer science bibliography, https://dblp.org}
}

@misc{spark_mllib_lsh,
  author = {Apache Spark},
  title = {MLlib - Locality Sensitive Hashing (LSH)},
  howpublished = {\url{https://spark.apache.org/docs/latest/ml-features.html#locality-sensitive-hashing-lsh}},
  year = {2025},
  note = {Accessed: 2025-04-28}
}

@inproceedings{datar2004locality,
  title={Locality-sensitive hashing scheme based on p-stable distributions},
  author={Datar, Mayur and Immorlica, Nicole and Indyk, Piotr and Mirrokni, Vahab S},
  booktitle={Proceedings of the twentieth annual symposium on Computational geometry},
  pages={253--262},
  year={2004},
  organization={ACM}
}

@inproceedings{zaharia2010spark,
  title={Spark: Cluster Computing with Working Sets},
  author={Zaharia, Matei and Chowdhury, Mosharaf and Franklin, Michael J and Shenker, Scott and Stoica, Ion},
  booktitle={Proceedings of the 2nd USENIX conference on Hot topics in cloud computing},
  pages={10--10},
  year={2010},
  organization={USENIX Association}
}

@inproceedings{armbrust2015spark,
  title={Spark SQL: Relational Data Processing in Spark},
  author={Armbrust, Michael and Xin, Reynold S and Lian, Cheng and Huai, Yin and Liu, Davies and Bradley, Joseph K and Meng, Xiangrui and Kaftan, Tomasz and Franklin, Michael J and Ghodsi, Ali and Zaharia, Matei},
  booktitle={Proceedings of the 2015 ACM SIGMOD International Conference on Management of Data},
  pages={1383--1394},
  year={2015},
  organization={ACM},
  doi={10.1145/2723372.2742797}
}

@article{mikolov2013efficient,
  title={Efficient Estimation of Word Representations in Vector Space},
  author={Mikolov, Tomas and Chen, Kai and Corrado, Greg and Dean, Jeffrey},
  journal={arXiv preprint arXiv:1301.3781},
  year={2013}
}

@book{bishop2006pattern,
  title={Pattern recognition and machine learning},
  author={Bishop, Christopher M and Nasrabadi, Nasser M},
  volume={4},
  number={4},
  year={2006},
  publisher={Springer}
}

@inproceedings{DBLP:conf/icde/AlotaibiLQEO21,
  author       = {Rana Alotaibi and
                  Chuan Lei and
                  Abdul Quamar and
                  Vasilis Efthymiou and
                  Fatma {\"{O}}zcan},
  title        = {Property Graph Schema Optimization for Domain-Specific Knowledge Graphs},
  booktitle    = {37th {IEEE} International Conference on Data Engineering, {ICDE} 2021,
                  Chania, Greece, April 19-22, 2021},
  pages        = {924--935},
  publisher    = {{IEEE}},
  year         = {2021},
  url          = {https://doi.org/10.1109/ICDE51399.2021.00085},
  doi          = {10.1109/ICDE51399.2021.00085}
}

@misc{covidgraph2021,
  author       = {{CovidGraph}},
  title        = {COVID-19 Knowledge Graph},
  year         = {2021},
  howpublished = {\url{https://covidgraph.org/}},
}

@article{mahmood2023optimizing,
  title={Optimizing clustering algorithms for anti-microbial evaluation data: A majority score-based evaluation of K-means, Gaussian mixture model, and multivariate T-distribution mixtures},
  author={Mahmood, Hira and Mehmood, Tahir and Al-Essa, Laila A},
  journal={IEEE Access},
  volume={11},
  pages={79793--79800},
  year={2023},
  publisher={IEEE}
}

@misc{icij_bahamas,
  author       = {{International Consortium of Investigative Journalists}},
  title        = {ICIJ Offshore Leaks Database (including Bahamas Leaks)},
  year         = {2016},
  howpublished = {\url{https://offshoreleaks.icij.org}}
}

@inproceedings{fontugne2024wisdom,
  title={The wisdom of the measurement crowd: Building the internet yellow pages a knowledge graph for the internet},
  author={Fontugne, Romain and Tashiro, Malte and Sommese, Raffaele and Jonker, Mattijs and Bischof, Zachary S and Aben, Emile},
  booktitle={Proceedings of the 2024 ACM on Internet Measurement Conference},
  pages={183--198},
  year={2024}
}

@article{himmelstein2017hetionet,
  author    = {Himmelstein, Daniel S. and Lizee, Antoine and Hessler, Connor and Brueggeman, Lars and Chen, Sabrina L. and Hadley, Dexter and Green, Ari and Khankhanian, Pouya and Baranzini, Sergio E. and Pellegrini, Matteo and et al.},
  title     = {Systematic integration of biomedical knowledge prioritizes drugs for repurposing},
  journal   = {eLife},
  volume    = {6},
  pages     = {e26726},
  year      = {2017},
  publisher = {eLife Sciences Publications, Ltd},
  doi       = {10.7554/eLife.26726},
  url       = {https://doi.org/10.7554/eLife.26726}
}

@inproceedings{bonifati2024dtgraph,
  title={DTGraph: declarative transformations of property graphs},
  author={Bonifati, Angela and Ramusat, Yann and Murlak, Filip and Fejza, Amela and Echahed, Rachid},
  booktitle={VLDB 2024-50th International Conference on Very Large Databases},
  year={2024}
}

@article{cazzaro2025zograscope,
  title={ZOGRASCOPE: A New Benchmark for Semantic Parsing over Property Graphs},
  author={Cazzaro, Francesco and Kleindienst, Justin and Gomez, Sofia Marquez and Quattoni, Ariadna},
  journal={arXiv preprint arXiv:2503.05268},
  year={2025}
}

@misc{neo4j_pole,
  author       = {Neo4j},
  title        = {Crime Investigation (POLE: Person--Object--Location--Event) dataset},
  howpublished = {\url{https://github.com/neo4j-graph-examples/pole}},
  note         = {Example dataset from the Neo4j Graph Examples repository},
  year         = {2025},
  urldate      = {2025-08-27}
}

@inproceedings{bogatu2020dataset,
  title={Dataset discovery in data lakes},
  author={Bogatu, Alex and Fernandes, Alvaro AA and Paton, Norman W and Konstantinou, Nikolaos},
  booktitle={2020 ieee 36th international conference on data engineering (icde)},
  pages={709--720},
  year={2020},
  organization={IEEE}
}

@inproceedings{paton2024dataset,
  title={Dataset Discovery and Exploration: State-of-the-art, Challenges and Opportunities.},
  author={Paton, Norman W and Wu, Zhenyu},
  booktitle={EDBT},
  pages={854--857},
  year={2024}
}

@ARTICLE{10577554,
  author={Liang, Ke and Meng, Lingyuan and Liu, Meng and Liu, Yue and Tu, Wenxuan and Wang, Siwei and Zhou, Sihang and Liu, Xinwang and Sun, Fuchun and He, Kunlun},
  journal={IEEE Transactions on Pattern Analysis and Machine Intelligence}, 
  title={A Survey of Knowledge Graph Reasoning on Graph Types: Static, Dynamic, and Multi-Modal}, 
  year={2024},
  volume={46},
  number={12},
  pages={9456-9478},
  keywords={Knowledge graphs;Cognition;Surveys;Extrapolation;Taxonomy;Interpolation;Fasteners;Knowledge graph reasoning;knowledge graph;temporal knowledge graph;multi-modal knowledge graph},
  doi={10.1109/TPAMI.2024.3417451}}

@article{chantharojwong2025lics,
  title={LICS: Towards Theory-Informed Effective Visual Abstraction of Property Graph Schemas},
  author={Chantharojwong, Kasidis and S Bhowmick, Sourav and Choi, Byron},
  journal={Proceedings of the ACM on Management of Data},
  volume={3},
  number={3},
  pages={1--26},
  year={2025},
  publisher={ACM New York, NY, USA}
}

@inproceedings{ahmetaj2025common,
  title={Common foundations for SHACL, ShEx, and PG-Schema},
  author={Ahmetaj, Shqiponja and Boneva, Iovka and Hidders, Jan and Hose, Katja and Jakubowski, Maxime and Labra Gayo, Jose Emilio and Martens, Wim and Mogavero, Fabio and Murlak, Filip and Okulmus, Cem and others},
  booktitle={Proceedings of the ACM on Web Conference 2025},
  pages={8--21},
  year={2025}
}

@article{yun2024recg,
  title={ReCG: Bottom-Up JSON Schema Discovery Using a Repetitive Cluster-and-Generalize Framework},
  author={Yun, Joohyung and Tak, Byungchul and Han, Wook-Shin},
  journal={Proceedings of the VLDB Endowment},
  volume={17},
  number={11},
  pages={3538--3550},
  year={2024},
  publisher={VLDB Endowment}
}

@article{mior2023jsonoid,
  title={JSONoid: Monoid-based Enrichment for Configurable and Scalable Data-Driven Schema Discovery},
  author={Mior, Michael J},
  journal={arXiv preprint arXiv:2307.03113},
  year={2023}
}

@article{dani2024introducing,
  title={Introducing Schema Inference as a Scalable SQL Function [Extended Version]},
  author={Dani, Calvin and Jahangiri, Shiva and H{\"u}tter, Thomas},
  journal={arXiv preprint arXiv:2411.13278},
  year={2024}
}

@inproceedings{khatiwada2025tabsketchfm,
  title={Tabsketchfm: Sketch-based tabular representation learning for data discovery over data lakes},
  author={Khatiwada, Aamod and Kokel, Harsha and Abdelaziz, Ibrahim and Chaudhury, Subhajit and Dolby, Julian and Hassanzadeh, Oktie and Huang, Zhenhan and Pedapati, Tejaswini and Samulowitz, Horst and Srinivas, Kavitha},
  booktitle={2025 IEEE 41st International Conference on Data Engineering (ICDE)},
  pages={1523--1536},
  year={2025},
  organization={IEEE}
}

@inproceedings{discala2016automatic,
  title={Automatic generation of normalized relational schemas from nested key-value data},
  author={DiScala, Michael and Abadi, Daniel J},
  booktitle={Proceedings of the 2016 International Conference on Management of Data},
  pages={295--310},
  year={2016}
}

@article{adelfio2013schema,
  title={Schema extraction for tabular data on the web},
  author={Adelfio, Marco D and Samet, Hanan},
  journal={Proceedings of the VLDB Endowment},
  volume={6},
  number={6},
  pages={421--432},
  year={2013},
  publisher={VLDB Endowment}
}

@inproceedings{DBLP:conf/er/KondylakisP12,
  author       = {Haridimos Kondylakis and
                  Dimitris Plexousakis},
  editor       = {Paolo Atzeni and
                  David W. Cheung and
                  Sudha Ram},
  title        = {Ontology Evolution: Assisting Query Migration},
  booktitle    = {Conceptual Modeling - 31st International Conference {ER} 2012, Florence,
                  Italy, October 15-18, 2012. Proceedings},
  series       = {Lecture Notes in Computer Science},
  volume       = {7532},
  pages        = {331--344},
  publisher    = {Springer},
  year         = {2012}
}

@inproceedings{DBLP:conf/sigmod/KondylakisP11,
  author       = {Haridimos Kondylakis and
                  Dimitris Plexousakis},
  editor       = {Timos K. Sellis and
                  Ren{\'{e}}e J. Miller and
                  Anastasios Kementsietsidis and
                  Yannis Velegrakis},
  title        = {Exelixis: evolving ontology-based data integration system},
  booktitle    = {Proceedings of the {ACM} {SIGMOD} International Conference on Management
                  of Data, {SIGMOD} 2011, Athens, Greece, June 12-16, 2011},
  pages        = {1283--1286},
  publisher    = {{ACM}},
  year         = {2011}
}

@article{kondylakis2025property,
  title={Property Graph Standards: State of the Art \& Open Challenges},
  author={Kondylakis, Haridimos and Dumbrava, Stefania and Lissandrini, Matteo and Yakovets, Nikolay and Bonifati, Angela and Efthymiou, Vasilis and Fletcher, George and Plexousakis, Dimitris and Tommasini, Riccardo and Troullinou, Georgia and others}
}

@inproceedings{DBLP:conf/caise/KondylakisETYP24,
  author       = {Haridimos Kondylakis and
                  Vasilis Efthymiou and
                  Georgia Troullinou and
                  Elisjana Ymeralli and
                  Dimitris Plexousakis},
 
  title        = {Property Graphs at Scale: {A} Roadmap and Vision for the Future (Short
                  Paper)},
  booktitle    = {CAiSE Workshops},
  volume       = {521},
  pages        = {180--185},
  
  year         = {2024}
  
}

@book{antoniou2004semantic,
  title={A semantic web primer},
  author={Antoniou, Grigoris and Van Harmelen, Frank},
  year={2004},
  publisher={MIT press}
}

@article{miller2003schema,
  title={On Schema Discovery},
  author={Miller, Ren{\'e}e J and Andritsos, Periklis},
  journal={IEEE Data Engineering Bulletin},
  volume={26},
  number={3},
  pages={39--44},
  year={2003}
}

@inproceedings{lissandrini2022knowledge,
  title={Knowledge graph exploration systems: Are we lost?},
  author={Lissandrini, Matteo and Mottin, Davide and Hose, Katja and Pedersen, Torben Bach},
  booktitle={Annual Conference on Innovative Data Systems Research},
  year={2022}
}

@article{mulder2025optimizing,
  title={Optimizing navigational graph queries},
  author={Mulder, Thomas and Fletcher, George and Yakovets, Nikolay},
  journal={The VLDB Journal},
  volume={34},
  number={2},
  pages={16},
  year={2025},
  publisher={Springer}
}

@article{sakr2021future,
  title={The future is big graphs: a community view on graph processing systems},
  author={Sakr, Sherif and Bonifati, Angela and Voigt, Hannes and Iosup, Alexandru and Ammar, Khaled and Angles, Renzo and Aref, Walid and Arenas, Marcelo and Besta, Maciej and Boncz, Peter A and others},
  journal={Communications of the ACM},
  volume={64},
  number={9},
  pages={62--71},
  year={2021},
  publisher={ACM New York, NY, USA}
}

@inproceedings{bonifati2019schema,
  title={Schema validation and evolution for graph databases},
  author={Bonifati, Angela and Furniss, Peter and Green, Alastair and Harmer, Russ and Oshurko, Eugenia and Voigt, Hannes},
  booktitle={International Conference on Conceptual Modeling},
  pages={448--456},
  year={2019},
  organization={Springer}
}

@article{polleres2023does,
  title={How does knowledge evolve in open knowledge graphs?},
  author={Polleres, Axel and Pernisch, Romana and Bonifati, Angela and Dell'Aglio, Daniele and Dobriy, Daniil and Dumbrava, Stefania and Etcheverry, Lorena and Ferranti, Nicolas and Hose, Katja and Jim{\'e}nez-Ruiz, Ernesto and others},
  journal={Transactions on Graph Data and Knowledge},
  volume={1},
  number={1},
  pages={11--1},
  year={2023}
}

@inproceedings{dumbrava2019approximate,
  title={Approximate querying on property graphs},
  author={Dumbrava, Stefania and Bonifati, Angela and Diaz, Amaia Nazabal Ruiz and Vuillemot, Romain},
  booktitle={International Conference on Scalable Uncertainty Management},
  pages={250--265},
  year={2019},
  organization={Springer}
}

@inproceedings{pandey2025towards,
  title={Towards Local-First Distributed Property Graphs},
  author={Pandey, Ayush and Dumbrava, Stefania and Shapiro, Marc and Ferreira, Carla and Pereira, M{\'a}rio and Pregui{\c{c}}a, Nuno},
  booktitle={Proceedings of the 12th Workshop on Principles and Practice of Consistency for Distributed Data},
  pages={22--29},
  year={2025}
}

@misc{wang2023graph,
  title={Graph-and tree-based indexes for high-dimensional vector similarity search: analyses, comparisons, and future directions. IEEE Data Eng. Bull. 46 (3), 3--21 (2023)},
  author={Wang, Zeyu and Wang, Peng and Palpanas, Themis and Wang, Wei},
  year={2023}
}

@article{lissandrini2018x2q,
  title={X2Q: your personal example-based graph explorer},
  author={Lissandrini, Matteo and Mottin, Davide and Velegrakis, Yannis and Palpanas, Themis},
  journal={Proceedings of the VLDB Endowment},
  volume={11},
  number={12},
  pages={2026--2029},
  year={2018},
  publisher={VLDB Endowment}
}

@article{wang2024steiner,
  title={Steiner-hardness: A query hardness measure for graph-based ann indexes},
  author={Wang, Zeyu and Wang, Qitong and Cheng, Xiaoxing and Wang, Peng and Palpanas, Themis and Wang, Wei},
  journal={Proceedings of the VLDB Endowment},
  volume={17},
  number={13},
  pages={4668--4682},
  year={2024},
  publisher={VLDB Endowment}
}

@article{dallachiesa2019improving,
  title={Improving Classification Quality in Uncertain Graphs},
  author={Dallachiesa, Michele and Aggarwal, Charu C and Palpanas, Themis},
  journal={Journal of Data and Information Quality (JDIQ)},
  volume={11},
  number={1},
  pages={1--20},
  year={2019},
  publisher={ACM New York, NY, USA}
}

@inproceedings{zhou2005learning,
  title={Learning from labeled and unlabeled data on a directed graph},
  author={Zhou, Dengyong and Huang, Jiayuan and Sch{\"o}lkopf, Bernhard},
  booktitle={Proceedings of the 22nd international conference on Machine learning},
  pages={1036--1043},
  year={2005}
}

@inproceedings{ji2011ranking,
  title={Ranking-based classification of heterogeneous information networks},
  author={Ji, Ming and Han, Jiawei and Danilevsky, Marina},
  booktitle={Proceedings of the 17th ACM SIGKDD international conference on Knowledge discovery and data mining},
  pages={1298--1306},
  year={2011}
}

@inproceedings{katal2013big,
  title={Big data: issues, challenges, tools and good practices},
  author={Katal, Avita and Wazid, Mohammad and Goudar, Rayan H},
  booktitle={2013 Sixth international conference on contemporary computing (IC3)},
  pages={404--409},
  year={2013},
  organization={Ieee}
}

@inproceedings{zouaq2020schema,
  title={What is the schema of your knowledge graph? leveraging knowledge graph embeddings and clustering for expressive taxonomy learning},
  author={Zouaq, Amal and Martel, Felix},
  booktitle={Proceedings of the international workshop on semantic big data},
  pages={1--6},
  year={2020}
}

@inproceedings{indyk1998approximate,
  title={Approximate nearest neighbors: towards removing the curse of dimensionality},
  author={Indyk, Piotr and Motwani, Rajeev},
  booktitle={Proceedings of the thirtieth annual ACM symposium on Theory of computing},
  pages={604--613},
  year={1998}
}

@misc{graphql-spec-2021,
  title        = {GraphQL Specification},
  author       = {{GraphQL Foundation}},
  howpublished = {\url{https://spec.graphql.org/October2021/}},
  note         = {Accessed: 2025-09-25},
  year         = {2021}
}

@book{nemenyi1963distribution,
  title={Distribution-free multiple comparisons.},
  author={Nemenyi, Peter Bjorn},
  year={1963},
  publisher={Princeton University}
}

@article{herbold2020autorank,
  title={Autorank: A python package for automated ranking of classifiers},
  author={Herbold, Steffen},
  journal={Journal of Open Source Software},
  volume={5},
  number={48},
  pages={2173},
  year={2020}
}

@incollection{christen2020building,
  title={Building Blocks for Linking Sensitive Data},
  author={Christen, Peter and Ranbaduge, Thilina and Schnell, Rainer},
  booktitle={Linking Sensitive Data: Methods and Techniques for Practical Privacy-Preserving Information Sharing},
  pages={123--167},
  year={2020},
  publisher={Springer}
}

@article{jafari2021survey,
  title={A survey on locality sensitive hashing algorithms and their applications},
  author={Jafari, Omid and Maurya, Preeti and Nagarkar, Parth and Islam, Khandker Mushfiqul and Crushev, Chidambaram},
  journal={arXiv preprint arXiv:2102.08942},
  year={2021}
}

@inproceedings{gionis1999similarity,
  title={Similarity search in high dimensions via hashing},
  author={Gionis, Aristides and Indyk, Piotr and Motwani, Rajeev and others},
  booktitle={Vldb},
  volume={99},
  number={6},
  pages={518--529},
  year={1999}
}

@article{alucc2019building,
  title={Building self-clustering RDF databases using Tunable-LSH},
  author={Alu{\c{c}}, G{\"u}ne{\c{s}} and {\"O}zsu, M Tamer and Daudjee, Khuzaima},
  journal={The VLDB Journal},
  volume={28},
  number={2},
  pages={173--195},
  year={2019},
  publisher={Springer}
}

@inproceedings{ferhatosmanoglu2001approximate,
  title={Approximate nearest neighbor searching in multimedia databases},
  author={Ferhatosmanoglu, Hakan and Tuncel, Ertem and Agrawal, Divyakant and El Abbadi, Amr},
  booktitle={Proceedings 17th International Conference on Data Engineering},
  pages={503--511},
  year={2001},
  organization={IEEE}
}

@inproceedings{houle2005fast,
  title={Fast approximate similarity search in extremely high-dimensional data sets},
  author={Houle, Michael E and Sakuma, Jun},
  booktitle={21st International Conference on Data Engineering (ICDE'05)},
  pages={619--630},
  year={2005},
  organization={IEEE}
}

@article{tao2010efficient,
  title={Efficient and accurate nearest neighbor and closest pair search in high-dimensional space},
  author={Tao, Yufei and Yi, Ke and Sheng, Cheng and Kalnis, Panos},
  journal={ACM Transactions on Database Systems (TODS)},
  volume={35},
  number={3},
  pages={1--46},
  year={2010},
  publisher={ACM New York, NY, USA}
}

@inproceedings{broder1998min,
  title={Min-wise independent permutations},
  author={Broder, Andrei Z and Charikar, Moses and Frieze, Alan M and Mitzenmacher, Michael},
  booktitle={Proceedings of the thirtieth annual ACM symposium on Theory of computing},
  pages={327--336},
  year={1998}
}

@inproceedings{broder1997resemblance,
  title={On the resemblance and containment of documents},
  author={Broder, Andrei Z},
  booktitle={Proceedings. Compression and Complexity of SEQUENCES 1997 (Cat. No. 97TB100171)},
  pages={21--29},
  year={1997},
  organization={IEEE}
}

@article{wang2017survey,
  title={A survey on learning to hash},
  author={Wang, Jingdong and Zhang, Ting and Sebe, Nicu and Shen, Heng Tao and others},
  journal={IEEE transactions on pattern analysis and machine intelligence},
  volume={40},
  number={4},
  pages={769--790},
  year={2017},
  publisher={IEEE}
}

@inproceedings{kardoulakis2021hint,
  title={Hint: Hybrid and incremental type discovery for large RDF data sources},
  author={Kardoulakis, Nikolaos and Kellou-Menouer, Kenza and Troullinou, Georgia and Kedad, Zoubida and Plexousakis, Dimitris and Kondylakis, Haridimos},
  booktitle={Proceedings of the 33rd International Conference on Scientific and Statistical Database Management},
  pages={97--108},
  year={2021}
}

@article{huang2015query,
  title={Query-aware locality-sensitive hashing for approximate nearest neighbor search},
  author={Huang, Qiang and Feng, Jianlin and Zhang, Yikai and Fang, Qiong and Ng, Wilfred},
  journal={Proceedings of the VLDB Endowment},
  volume={9},
  number={1},
  pages={1--12},
  year={2015},
  publisher={VLDB Endowment}
}

@misc{CIDOC_CRM,
  author       = {{CIDOC CRM Special Interest Group}},
  title        = {CIDOC Conceptual Reference Model (CIDOC CRM)},
  howpublished = {\url{http://www.cidoc-crm.org/}},
  note         = {Accessed: 2025-09-30}
}

@inproceedings{njoku2024finding,
  title={Finding relevant information in big datasets with ML},
  author={Njoku, Uchechukwu Fortune and Abell{\'o} Gamazo, Alberto and Bilalli, Besim and Bontempi, Gianluca},
  booktitle={Proceedings 27th International Conference on Extending Database Technology (EDBT 2024): Paestum, Italy, March 25-March 28},
  pages={846--849},
  year={2024},
  organization={OpenProceedings}
}

@inproceedings{giakatos2025pythia,
  title={Pythia: Facilitating Access to Internet Data Using LLMs and IYP},
  author={Giakatos, Dimitrios and Tashiro, Malte and Fontugne, Romain},
  booktitle={2025 IEEE 50th Conference on Local Computer Networks (LCN)},
  pages={1--7},
  year={2025},
  organization={IEEE}
}
\end{document}